\documentclass[10pt,conference]{IEEEtran}
\ifCLASSINFOpdf
\else
\fi

\usepackage{array}
\usepackage{subfig}
\usepackage{amsmath}
\usepackage[nolist,nohyperlinks]{acronym}
\usepackage{graphicx,wrapfig,lipsum}
\usepackage{rotating}
\usepackage{amsmath, amssymb, amsthm}
\usepackage{stmaryrd}
\usepackage{tikz}
\usepackage{verbatim}
\usepackage{url}
\usepackage[utf8]{inputenc}
\usepackage[english]{babel}
\newtheorem{theorem}{Theorem}[]

\newtheorem{lemma}[]{Lemma}
\usepackage{multicol}
\usepackage{xr-hyper}
\usepackage{algorithm}
\usepackage{algpseudocode}
\usepackage{multicol}
\usepackage{setspace}
\newtheorem{definition}{Definition}
\usepackage[noadjust]{cite}
\usepackage{bbm}
\usepackage[normalem]{ulem}
\usepackage{color}
\usepackage{dsfont}
\usepackage[margin=0.70in]{geometry}
\setlength{\columnsep}{0.227in}
\newcommand\blfootnote[1]{%
  \begingroup
  \renewcommand\thefootnote{}\footnote{#1}%
  \addtocounter{footnote}{-1}%
  \endgroup
}
\usepackage{flushend}
\begin{document}
%\bstctlcite{IEEEexample:BSTcontrol}
	%
	% paper title
	% Titles are generally capitalized except for words such as a, an, and, as,
	% at, but, by, for, in, nor, of, on, or, the, to and up, which are usually
	% not capitalized unless they are the first or last word of the title.
	% Linebreaks \\ can be used within to get better formatting as desired.
	% Do not put math or special symbols in the title.
	\title{Modeling and Analysis of HetNets with mm-Wave Multi-RAT Small Cells Deployed Along Roads }
	\author{Gourab Ghatak$^{\dagger}$ $^\ddagger$, Antonio De Domenico$^{\dagger}$, and Marceau Coupechoux$^\ddagger$
 \\ { $^{\dagger}$CEA, LETI, MINATEC, F-38054 Grenoble,
France; $^\ddagger$LTCI, Telecom ParisTech, Universit\'e Paris Saclay, France.}
\\ {Email: gourab.ghatak@cea.fr; antonio.de-domenico@cea.fr, and marceau.coupechoux@telecom-paristech.fr}}
		\maketitle
        \blfootnote{The research leading to these results are jointly funded by the European Commission (EC) H2020 and the Ministry of Internal affairs and Communications (MIC) in Japan under grant agreements Nº 723171 5G MiEdge.}
        \thispagestyle{empty}
	\begin{abstract}
We characterize a multi tier network with classical macro cells, and multi radio access technology (RAT) small cells, which are able to operate in microwave and millimeter-wave (mm-wave) bands. The small cells are assumed to be deployed along roads modeled as a Poisson line process. This characterization is more realistic as compared to the classical Poisson point processes typically used in literature. In this context, we derive the association and RAT selection probabilities of the typical user under various system parameters such as the small cell deployment density and mm-wave antenna gain, and with varying street densities. Finally, we calculate the signal to interference plus noise ratio (SINR) coverage probability for the typical user considering a tractable dominant interference based model for mm-wave interference. Our analysis reveals the need of deploying more small cells per street in cities with more streets to maintain coverage, and highlights that mm-wave RAT in small cells can help to improve the SINR performance of the users.
	\end{abstract}
	\IEEEpeerreviewmaketitle
    \section{Introduction}
To meet the tremendous increase in demand of high data rates in future wireless networks, the use of mm-wave bands is an attractive solution. However, mm-wave transmissions are associated with high path-loss and sensitivity to blockages~\cite{rappaport2013millimeter}. Therefore, to maintain ubiquitous coverage, mm-wave technology will be overlayed on top of the existing classical $\mu$-wave architecture. In an urban scenario, these mm-wave base stations are envisaged to be deployed along the roads e.g. on top of buildings and lamp-posts to cater to the needs of outdoor users.

In the context of heterogeneous networks, the user performance is often analyzed with the help of stochastic geometry, i.e., in terms of signal to interference plus noise (SINR) coverage probability and rate coverage probability~\cite{andrews2011tractable}. These metrics have been derived to investigate single-tier~\cite{bai2015coverage} and multi-tier mm-wave networks~\cite{di2015stochastic}. Elshaer et al.~\cite{elshaer2016downlink} have analyzed a multi-tier network with $\mu$-wave macro cells and mm-wave small cells in terms of user association, SINR and rate coverage, in both uplink and downlink scenarios. 
However, in these works, the base station locations are modeled as classical homogeneous Poisson point processes on the $\mathbb{R}^2$ plane~\cite{lee2013stochastic}, or as Poisson cluster processes~\cite{chun2015modeling}, which are not realistic representations of the network architecture in an urban scenario. 

To address this issue, we investigate a network geometry, where the small cells are deployed along the roads. In this regard, we take help of a framework introduced by Morlot~\cite{6260478} based on a Poisson line tessellation to model the roads in an urban scenario. Furthermore, we consider that the small cells are equipped with multi-radio access technology (RAT), thereby enabling them to opportunistically serve the users with both micro- and mm-wave bands.

The contribution of this paper is summarized as follows. We characterize a novel multi-tier network with small cells deployed along the streets, and derive the association probabilities of the typical user. Then, we consider a dominant interferer based model to characterize the mm-wave interference. On one hand, this approach of modeling mm-wave interference is more tractable than to consider all interfering base stations, whereas on the other hand, we show that it is more accurate in characterizing the SINR coverage as compared to a noise-limited approach~\cite{7593259}. Using these results, we derive the SINR coverage probability of the typical user, and investigate the effect of different deployment parameters of the network on the SINR performance. Our analysis reveals the fact that in cities with more streets, the operator must necessarily deploy more small cells per street to maintain the SINR coverage. Moreover, we highlight that the utility of multi-RAT base stations is not only limited to providing high data rate access to the users, but also that this technology, by taking advantage of the directional antennas, can considerably improve the SINR. Finally, we show that this gain in SINR performance brought by mm-wave, reaches a maximum value for a certain small cell deployment density, depending on the street density, and saturates at denser deployments.

The rest of the paper is organized as follows: in Section \ref{Sec:SM} we introduce the network architecture. We derive some preliminary results related to the mm-wave interference model, and the network geometry, in Section \ref{Sec:Prel}. In Section \ref{Sec:AP} and \ref{Sec:SCP} we compute the association probabilities and SINR coverage probability of a typical user, respectively. In Section \ref{Sec:NRD} we present some numerical results to discuss salient trends of the network. Finally, the paper concludes in Section \ref{Sec:Con}.
	\section{System Model}
    \label{Sec:SM}
    We consider a multi-tier cellular network consisting of macro base stations (MBSs) and small cell base stations (SBSs). The MBSs are deployed to ensure continuous coverage to the users. Whereas, the multi-RAT SBSs, deployed along the roads, locally provide high data rate by jointly exploiting $\mu$-wave and mm-wave bands. We assume that the same $\mu$-wave band is shared by MBSs and SBSs. 
From the perspective of the users, the base stations can either be in line-of-sight (LOS), or non line-of-sight (NLOS). In our analysis, we use the subscript notation $t,v,r$ to characterize the base stations, where $t \in \{M,S\}$ denotes the tier (MBS or SBS), $v \in \{L,N\}$ denotes the visibility state (LOS or NLOS), and $r \in \{\mu,m\}$ denotes the RAT ($\mu$-wave or mm-wave).
    \subsection{Network Model}
The MBS locations are modeled as a homogeneous Poisson point process (PPP) $\phi_M$ with intensity $\lambda_M$. On the contrary, the roads are modeled as Poisson line processes (PLP) with intensity $\lambda_R$. The SBSs are deployed on the PLP tessellation of the roads, according to a PPP $\phi_S$ with intensity $\lambda_S$. We denote by $\phi_i$, the 1D PPP on each road, where $i$ is the index of the roads. Furthermore, we consider outdoor users, which are modeled as stationary PPP $\phi_{OU}$ along the roads, with an intensity $\lambda_{OU}$. Thus, both the SBSs and users are modeled by a Cox process driven by the intensity measure of the road process~\cite{chiu2013stochastic}. In the following, we carry out our analysis from the perspective of the typical user~\cite{chiu2013stochastic}, located at the origin. 

\subsection{Blockage and Path-loss}
We assume a LOS ball model to categorize the MBSs into either LOS or NLOS processes, from the perspective of a user: {$\phi_{ML}$ and $\phi_{MN}$, respectively. As per the LOS ball approximation introduced in \cite{bai2015coverage}, let $D_M$ be the MBS LOS ball radius. The probability of the typical user to be in LOS from a MBS at a distance $r$ is $p_M(r)=1$, if $ r < D_M$, and $p_M(r)=0$, otherwise. All the SBSs lying on the same road as that of the typical user are considered to be in LOS, denoted by the process $\phi_{SL}$. All the other SBSs, are considered to be in NLOS, denoted by the process $\phi_{SN}$.

We assume that the path-loss at a distance $d_{tvr}$ from a transmitter is given by: $l_{tvr}(d) = K_{tvr}d_{tvr}^{-\alpha_{tvr}}$. $K$ and $\alpha$ are the path-loss coefficient and exponent, respectively. For $\mu$-wave communications, we assume a fast fading that is Rayleigh distributed with variance equal to one. On the contrary, due to the low local scattering in mm-wave~\cite{rappaport2013millimeter}, we consider a Nakagami fading for mm-wave communications~\cite{7593259}.
 Moreover let $G_0$ be the directional antenna gain in mm-wave transmissions. Thus, the average received power is given by $P_{tvr} = P_tK_{tvr}d_{tvr}^{-\alpha_{tvr}}$, in $\mu$-wave and $P_{tvr} = G_0P_tK_{tvr}d_{tvr}^{-\alpha_{tvr}}$ in mm-wave; where $P_t$ is the transmit power of a BS of tier $t$.
    \section{Preliminaries}
    \label{Sec:Prel}
     \subsection{Interference in LOS SBS mm-Wave Operation}
    We assume that in mm-wave operations, a user experiences interference only from the neighboring mm-wave SBS, due to the highly directional antenna. In Section~\ref{Sec:NRD}, we prove the accuracy of this assumption with Monte Carlo simulations. In this section, we model the probability that the typical user experiences interference from the neighboring SBS.   
%    In Fig. \ref{fig:mmFig}, consider that the user $U_1$ is being served by the SBS $B_1$, and the user $U_2$ is served by SBS $B_2$. We are interested in calculating the probability that the user $U_1$, experiences interfering mm-wave signals from $B_2$. In this context, we introduce the following definition.
    \begin{definition}
       We define 'spillover' as the region of interference generated by a mm-wave SBS to the coverage area of a neighboring SBS, while serving a user near its cell edge.
    \end{definition} 
        \begin{figure} 
\centering
\includegraphics[width=8cm,height = 3.5cm]{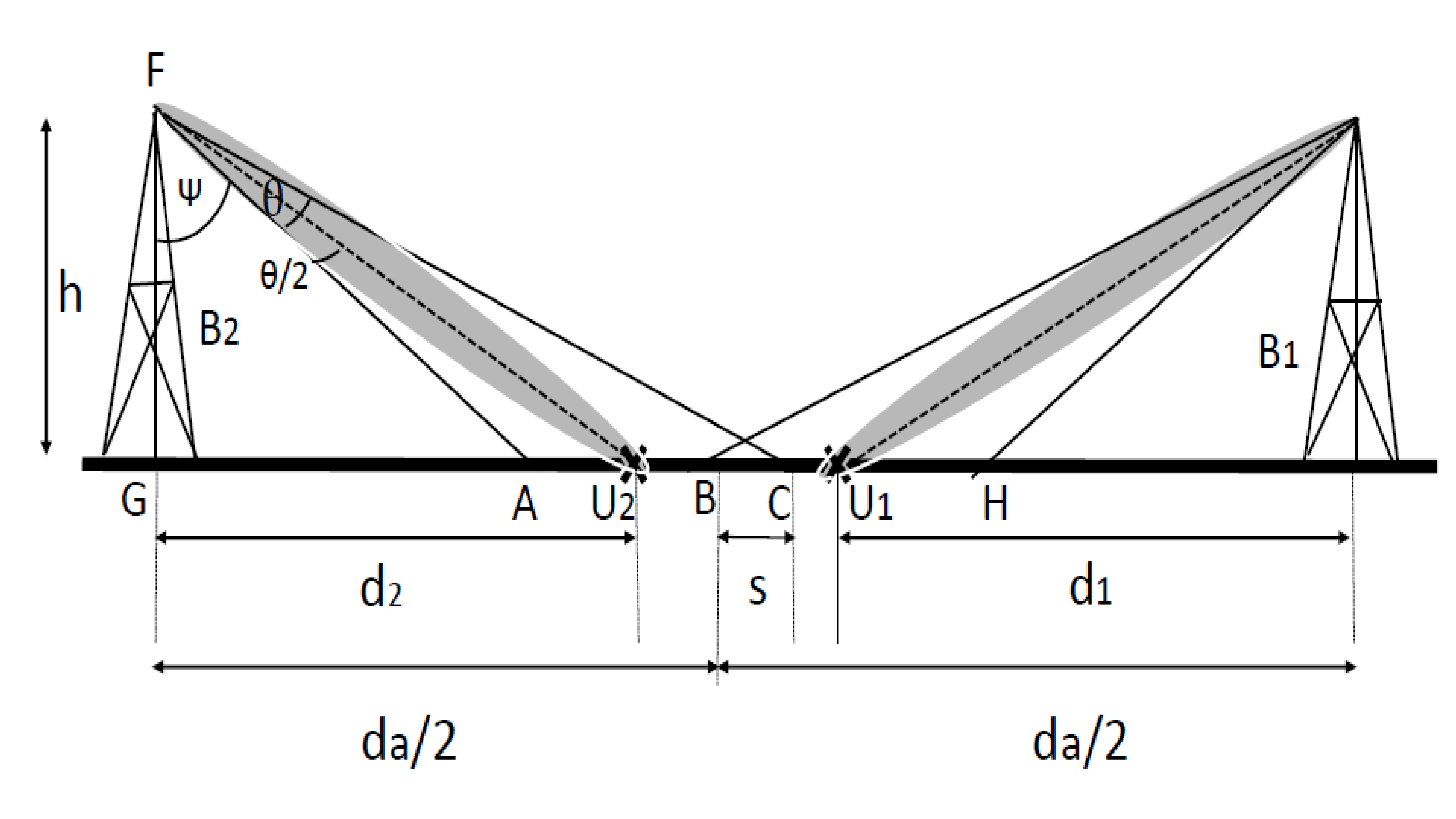}
\caption{Interference in mm-wave operation.}
\label{fig:mmFig}
\end{figure}
   \begin{lemma}
   For a typical user being served with mm-wave, the probability of experiencing mm-wave interference ($p_G$) from its closest neighboring SBS is given by \eqref{eq:mm_inter},
   \begin{figure*}
   \begin{align}
    p_G =  \int_{d^*}^{\hat{d}}\int_{d'}^{\frac{x}{2}} \exp\left(-\mu_{S}\left(x - h\tan\left(\frac{\theta}{2} + \tan^{-1}\frac{y}{h}\right)\right)\right) 
   \left(1-\exp\left(\mu_{OU}\left(\frac{x}{2}-d'\right)\right) \right) f_{yx}(y,x) dy dx
    \label{eq:mm_inter}
\end{align}
    \hrulefill
   \end{figure*}   
   where $\theta$ is the beam-width of the directional antenna, $d' = h\tan\left(\tan^{-1}\frac{x}{2h} - \frac{\theta}{2}\right)$, $d^* = \max \left(\frac{h - \sqrt{h^2 - 8h^2\tan\left(\frac{\theta}{2}\right)}}{2\tan\left(\frac{\theta}{2}\right)},2h\tan\left(\frac{\theta}{2}\right)\right)$, $\hat{d} = \frac{h + \sqrt{h^2 - 8h^2\tan\left(\frac{\theta}{2}\right)}}{2\tan\left(\frac{\theta}{2}\right)}$ and $f_{xy}(x,y) = 2\lambda_S^2\exp(-\lambda_S(x))$.
   \label{lem:p_G}
   \end{lemma}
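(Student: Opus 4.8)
The plan is to condition on the relevant along‑road distances, translate the directional‑beam geometry of Fig.~\ref{fig:mmFig} into explicit constraints, and then average. I would set up coordinates along the road through the typical user $U_0$ (at the origin), with every SBS at antenna height $h$; the serving SBS $B_0$ and the closest neighbouring SBS $B_1$ lie on this road. Conditioning on the pair of distances $(y,x)$ to the serving and to the neighbouring SBS, the key observation is that $U_0$ sees mm‑wave interference from $B_1$ \emph{iff} the narrow beam that $B_1$ steers towards its own (near‑cell‑edge) user happens to contain $U_0$. Using the tangent addition formula, ``$U_0$ lies inside $B_1$'s angular sector of width $\theta$'' becomes the near‑edge constraint $y\ge d' = h\tan\!\big(\tan^{-1}\tfrac{x}{2h}-\tfrac{\theta}{2}\big)$ together with the requirement that the far edge of the sector, at along‑road offset $h\tan\!\big(\tfrac{\theta}{2}+\tan^{-1}\tfrac{y}{h}\big)$, actually reaches $U_0$. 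Intersecting these with the feasibility conditions $d'\ge 0$ (which forces $x\ge 2h\tan(\theta/2)$, the second branch of the $\max$ in $d^*$) and the quadratic $\tan(\theta/2)\,x^2-hx+2h^2\le 0$ governing whether the sector overlaps the neighbouring cell at all (whose roots are $\hat d$ and the first branch of the $\max$ in $d^*$, and which is solvable precisely when $\tan(\theta/2)\le 1/8$) produces the integration region $d^*\le x\le \hat d$, $d'\le y\le x/2$.

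Next, conditioned on $(x,y)$ in that region, I would write the interference probability as a product of two independent factors, using that the SBS point process and the outdoor‑user point process are independent given the road. The first factor is the void probability that no further SBS sits in the stretch of road that would otherwise change which SBS is the serving/neighbouring one (or shield $U_0$); since the SBSs along the road form a one–dimensional PPP this is $\exp\!\big(-\mu_S\big(x-h\tan(\tfrac{\theta}{2}+\tan^{-1}\tfrac{y}{h})\big)\big)$. The second factor is the probability that $B_1$ actually has an active user to serve inside the sub‑interval $[d',x/2]$, i.e. in the portion of its cell whose illumination spills onto $U_0$; by the void probability of the outdoor‑user PPP on an interval of length $x/2-d'$ this equals $1-\exp\!\big(-\mu_{OU}(\tfrac{x}{2}-d')\big)$. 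Multiplying gives the conditional probability.

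Finally I would decondition. Because $U_0$ lies on the road and the SBSs on it are a $\lambda_S$‑intensity 1‑D PPP, the distances to the nearest and to the next SBS seen from the origin are the first two points of that PPP, hence an independent sum of exponentials; accounting for the two directions along the road gives the joint density $f_{yx}(y,x)=2\lambda_S^2\exp(-\lambda_S x)$ on $0<y<x$. Integrating the conditional probability against $f_{yx}$ over $d^*\le x\le\hat d$, $d'\le y\le x/2$ yields \eqref{eq:mm_inter}. The step I expect to be the real obstacle is the first one: correctly reducing ``$U_0$ falls in the directional beam of the neighbouring SBS while that SBS serves a cell‑edge user'' to the algebraic constraints, handling the degenerate configurations that make $d^*$ a maximum (and that implicitly require a narrow beam for $d^*,\hat d$ to be real), and checking that both exponential factors remain genuine probabilities on the entire integration region; the void‑probability evaluations and the distance density are then routine.
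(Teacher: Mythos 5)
Your overall architecture matches the paper's: reduce the beam geometry to the constraints defining $d'$, $d^*$ and $\hat d$ (including the quadratic $\tan(\tfrac{\theta}{2})x^2 - hx + 2h^2\tan(\tfrac{\theta}{2}) \le 0$ whose roots give $\hat d$ and the first branch of the $\max$, and the condition $d'\ge 0$ giving the second branch), multiply two independent void-probability factors (one for the SBS process, one for the outdoor-user process), and decondition against a joint density of the form $2\lambda_S^2 e^{-\lambda_S x}$. Your second factor is exactly the paper's, and you even write the user-PPP void probability with the correct sign, $1-\exp(-\mu_{OU}(\tfrac{x}{2} - d'))$, where the paper's statement carries a sign typo.

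The genuine gap is in the meaning you assign to $y$, and hence in your justification of the first exponential factor. In the paper's proof, $x=d_a$ is the inter-SBS distance and $y=d_2$ is the position of the user $U_2$ served by the \emph{neighbouring} SBS $B_2$; it is $U_2$'s position that determines where $B_2$ steers its beam, so the footprint of that beam reaches $h\tan(\tfrac{\theta}{2}+\tan^{-1}\tfrac{y}{h})$ from $B_2$ and spills a strip of length $s = h\tan(\tfrac{\theta}{2}+\tan^{-1}\tfrac{y}{h})-\tfrac{x}{2}$ into the typical user's cell. The first exponential is then $\mathbb{P}(d_1 \ge \tfrac{x}{2}-s)$, the probability that the \emph{typical} user lies inside that strip, evaluated via the void probability of the SBS process on the interval of length $\tfrac{x}{2}-s$ between the typical user and its serving SBS; and $f_{yx}$ is derived as the joint law of $(d_a,d_2)$ by conditioning from $U_2$'s viewpoint ($B_1$ must sit on the far side of $U_2$ at distance greater than $y$). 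You instead take $y$ to be the typical user's distance to its own serving SBS and read $h\tan(\tfrac{\theta}{2}+\tan^{-1}\tfrac{y}{h})$ as the far edge of a sector containing $U_0$ --- but the neighbouring SBS points its beam at \emph{its own} user, not at $U_0$, so the spillover reach cannot be a function of the typical user's serving distance; and your gloss of the first factor as a void probability over ``a stretch that would otherwise change which SBS is serving/neighbouring'' does not yield that expression. Under your reading the existence factor $1-\exp(-\mu_{OU}(\tfrac{x}{2}-d'))$ also becomes detached from the geometry, since the interfering user's position then appears nowhere in the beam direction. So although the skeleton of \eqref{eq:mm_inter} is reproduced, the step you yourself flagged as the obstacle --- translating ``$U_0$ falls in the neighbouring beam'' into the integrand --- is where the argument fails: the integration variable $y$ must track the interfering user, not the typical one.
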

    \begin{proof}
   See Appendix \ref{App:p_G}.
    \end{proof}   
 \subsection{Characterization of the NLOS SBS Cox Process}   
\begin{lemma}
    The pdf of the distance from a typical user to the nearest NLOS SBS is given by \eqref{eq:pdf_ds1}.
    \begin{figure*}
     \begin{align}
     f_{d_{S1}}(x) = 2\pi\lambda_R\exp\left(-2\pi\lambda_R \left(x+\int_{0}^x\exp\left(-2\lambda_S\sqrt{x^2 - r^2}\right)dr\right)\right)   \left[\lambda_S x\int_0^x \frac{\exp(-2\lambda_S\sqrt{x^2-r^2})}{\sqrt{x^2 - r^2}}dr\right]
     \label{eq:pdf_ds1}
    \end{align}
    \hrulefill
    \end{figure*}
    \label{lem:nearpt}
    \end{lemma}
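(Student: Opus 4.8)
The plan is to obtain the density by first computing the complementary cdf $\mathbb{P}(d_{S1}>x)$ as a void probability of the NLOS SBS process over the disk $B(o,x)$ of radius $x$ about the origin $o$, and then differentiating. The first ingredient is the Palm description of the typical user in the driving Cox process: conditioning on an outdoor user at the origin, the road configuration is, up to a null set, the original stationary PLP of intensity $\lambda_R$ together with one additional road $\ell_0$ through the origin (with uniformly distributed orientation) carrying the user itself; this is the Slivnyak-type property of the Poisson line process underlying the Cox construction. By the blockage model every SBS on $\ell_0$ is LOS, so the NLOS SBS process is exactly the Cox process driven by the \emph{stationary} PLP, and the added road $\ell_0$ is irrelevant to $d_{S1}$.

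Next, fix $x>0$ and observe that $\{d_{S1}>x\}$ is the event that no NLOS SBS lies in $B(o,x)$. I would condition on the line process and use that (i) given the lines, the one-dimensional PPPs of SBSs on distinct roads are mutually independent, and (ii) a road at perpendicular distance $r$ from the origin meets $B(o,x)$ iff $r<x$, and then in a chord of length $2\sqrt{x^2-r^2}$, so the number of SBSs it places inside $B(o,x)$ is Poisson with mean $2\lambda_S\sqrt{x^2-r^2}$ and it places none with probability $e^{-2\lambda_S\sqrt{x^2-r^2}}$. Parametrising the roads of the stationary PLP by their polar coordinates $(r,\vartheta)\in[0,\infty)\times[0,2\pi)$, on which they form a homogeneous PPP of intensity $\lambda_R$, I would apply the probability generating functional of this PPP — equivalently, retain only the roads that do place an SBS in $B(o,x)$ and require this thinned process to be empty — to obtain
\begin{align}
\mathbb{P}(d_{S1}>x)=\exp\!\left(-2\pi\lambda_R\!\int_0^{x}\!\Bigl(1-e^{-2\lambda_S\sqrt{x^2-r^2}}\Bigr)\,dr\right).
\end{align}

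Finally, $f_{d_{S1}}(x)=-\frac{d}{dx}\mathbb{P}(d_{S1}>x)$; by the chain rule this brings down a factor $2\pi\lambda_R$ times the same exponential, multiplied by the derivative of the inner integral, which I would evaluate with Leibniz's rule. The boundary term at $r=x$ vanishes, since there the integrand equals $1-e^{0}=0$, so only the derivative under the integral sign survives and yields $2\lambda_S x\int_0^x \frac{e^{-2\lambda_S\sqrt{x^2-r^2}}}{\sqrt{x^2-r^2}}\,dr$ (the singularity at $r=x$ being integrable); collecting the pieces gives \eqref{eq:pdf_ds1}. I expect the main obstacle to be the first step — making the conditioning on the typical user precise, i.e.\ justifying via Palm calculus for Cox processes that the NLOS SBSs are driven solely by the stationary PLP and that the road added through the origin contributes nothing to $d_{S1}$ — together with the bookkeeping in the generating-functional step; the concluding differentiation is routine.
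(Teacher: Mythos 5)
Your proposal is correct and follows essentially the same route as the paper: both compute $\mathbb{P}(d_{S1}>x)$ as the void probability of the NLOS SBS Cox process in $B(o,x)$ by conditioning on the lines hitting the disk and multiplying the per-chord void probabilities $e^{-2\lambda_S\sqrt{x^2-r^2}}$ (your direct PGFL application is just the closed form of the paper's Poisson-series expansion over the number of intersecting lines), followed by differentiation. It is worth noting that your final expression agrees with the paper's own appendix computation, and both expose two apparent typos in the boxed statement \eqref{eq:pdf_ds1}: the exponent should involve $x-\int_{0}^{x}e^{-2\lambda_S\sqrt{x^2-r^2}}\,dr$ rather than $x+\int_{0}^{x}e^{-2\lambda_S\sqrt{x^2-r^2}}\,dr$, and the bracketed factor should carry $2\lambda_S x$ rather than $\lambda_S x$.
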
    
 \begin{proof}
  See Appendix \ref{App:nearpt}.
   \end{proof}      
      \begin{lemma} (\cite{6260478}, Theorem III.1).
    The SBS process $\phi_S$ is stationary and isotropic, with intensity $\pi \lambda_R \lambda_S$. Under Palm, it is the sum of $\phi_S$, of an independent Poisson point process on a line through the origin O with a uniform independent angle, and of an atom at O.
    \label{lem:COX}
    \end{lemma}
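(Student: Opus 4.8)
The plan is to recognize $\phi_S$ as a Cox process and then push stationarity, isotropy, the intensity formula, and the Palm description through the Poisson structure of the underlying line process, which is essentially the route taken in \cite{6260478}. First I would fix notation: let $\Phi$ be the Poisson line process of intensity $\lambda_R$, viewed as a Poisson point process on the space of lines with intensity measure $\mu(d\ell)=\lambda_R\,d\theta\,dp$ in the $(\theta,p)$ parametrization, and let $\mathcal L_\Phi=\sum_i \mathcal H^1|_{L_i}$ be the one-dimensional length measure carried by the union of the lines $L_i$. Since the SBSs on each line form an independent $1$D Poisson process of intensity $\lambda_S$, the point process $\phi_S$ is exactly the Cox process on $\mathbb R^2$ directed by the random measure $\xi=\lambda_S\,\mathcal L_\Phi$; this identification is the organizing idea for everything below.

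For stationarity and isotropy I would use that the law of $\Phi$ is invariant under all Euclidean motions (a standard property of the Poisson line process) and that, conditionally on $\Phi$, the $1$D Poisson processes on the individual lines are jointly independent and translation invariant along their respective lines. Together these give motion invariance of the law of $\xi$, and hence of the Cox process $\phi_S$ directed by $\xi$. For the intensity I would apply the Mecke/Campbell formula for $\Phi$ together with Fubini to obtain $\mathbb E[\phi_S(A)]=\lambda_S\lambda_R\int_0^\pi\!\int_{\mathbb R}\mathcal H^1(A\cap\ell_{\theta,p})\,dp\,d\theta$, and then invoke the integral-geometric identity $\int_{\mathbb R}\mathcal H^1(A\cap\ell_{\theta,p})\,dp=|A|$, valid for each fixed $\theta$, to conclude that $\phi_S$ has intensity $\pi\lambda_R\lambda_S$.

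The Palm statement is the substantive part, and I would split it in two. First, I would use the general fact that the Palm distribution of a Cox process at a point $x$ is the law of the Cox process directed by the Palm version of its directing measure, with an atom added at $x$; this reduces the problem to identifying the Palm version of the line-supported measure $\mathcal L_\Phi$ at the origin. Second, I would disintegrate the Campbell measure of $\mathcal L_\Phi$ by applying the Mecke equation to the Poisson process $\Phi$ on line space and then changing variables from $(\theta,p)$ to $(\theta,x)$ along each line; comparing the outcome with $\mathbb E[\mathcal L_\Phi(A)]=\pi\lambda_R|A|$ shows that the Palm version of $\mathcal L_\Phi$ at $O$ is $\mathcal L_\Phi$ plus the length measure of a single extra line through $O$ whose orientation is uniform on $[0,\pi)$ and independent of $\Phi$ (uniformity being a consequence of the isotropy of $\mu$ and the unit Jacobian of the change of variables). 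Feeding this back into the Cox construction recovers precisely the claimed superposition of $\phi_S$, an independent $1$D Poisson process of intensity $\lambda_S$ on a uniformly oriented line through $O$, and an atom at $O$.

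I expect the main obstacle to be the Palm calculus for the singular, line-supported random measure $\mathcal L_\Phi$: one must carry out the disintegration of its Campbell measure simultaneously with the Blaschke--Petkantschin-type change of variables cleanly, and then check that the extra line that appears is genuinely independent of the remainder of $\Phi$ and carries an independent Poisson process of SBSs --- statements that are easy to assert loosely but need care to justify. Everything else (stationarity, isotropy, the intensity count) is routine once the Cox-process viewpoint is adopted, and the complete details are given in \cite{6260478}.
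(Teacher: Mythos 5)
Your proposal is correct, but note that the paper offers no proof of this lemma at all: it is quoted verbatim from \cite{6260478}, Theorem III.1, so there is no in-paper argument to compare against. Your reconstruction --- viewing $\phi_S$ as a Cox process directed by $\lambda_S$ times the length measure of the Poisson line process, getting stationarity/isotropy from motion invariance of that directing measure, the intensity $\pi\lambda_R\lambda_S$ from Campbell--Mecke plus the integral-geometric identity, and the Palm description from the fact that the Palm version of a Cox process is the Cox process driven by the Palm version of its directing measure (the Mecke equation producing the extra uniformly oriented line through $O$, plus the atom) --- is the standard route and is essentially how the cited reference establishes the result; you also correctly flag the one place needing care, namely the disintegration of the Campbell measure of the singular line-supported measure and the independence of the extra line.
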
 
%   As a result, from the perspective a typical user, the NLOS SBS process, i.e., SBS process after removing the line going though the user, is a stationary $\phi_S$.
   \begin{lemma}
   	The probability generating functional (PGF), for a class of radially symmetric functions $\nu$, of the Poisson Line Cox Process $\phi_S$ is given by \eqref{eq:PGF_COX}.
    \label{lem:Cox_PGF}
   \end{lemma}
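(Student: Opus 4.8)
The plan is to derive the PGFL directly from the Cox construction of $\phi_S$ by conditioning on the underlying Poisson line process, which I denote $\Phi_R$. Writing $G[\nu] = \mathbb{E}\!\left[\prod_{x \in \phi_S}\nu(x)\right]$, observe that conditionally on $\Phi_R$ the points of $\phi_S$ lying on distinct lines are independent, and on each line $L \in \Phi_R$ they form an independent one-dimensional PPP of intensity $\lambda_S$. Applying the PGFL of a one-dimensional PPP on each line and using the conditional independence across lines gives
\[
\mathbb{E}\!\left[\prod_{x \in \phi_S}\nu(x)\;\middle|\;\Phi_R\right] = \prod_{L \in \Phi_R}\exp\!\left(-\lambda_S\int_{L}\bigl(1-\nu(x)\bigr)\,dx\right).
\]
Taking the outer expectation over $\Phi_R$, the right-hand side is precisely the PGFL of the Poisson line process evaluated at the line functional $g(L) = \exp\!\left(-\lambda_S\int_L(1-\nu(x))\,dx\right)$.

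The second step is to invoke the standard representation of the Poisson line process as a PPP on the cylinder $\mathcal{C} = \mathbb{R}\times[0,\pi)$ with intensity $\lambda_R$, a point $(\rho,\theta)\in\mathcal{C}$ being identified with the line at signed distance $\rho$ from the origin and normal angle $\theta$. Applying the PGFL of this PPP yields
\[
G[\nu] = \exp\!\left(-\lambda_R\int_{0}^{\pi}\!\int_{-\infty}^{\infty}\!\left(1 - \exp\!\left(-\lambda_S\int_{L(\rho,\theta)}\bigl(1-\nu(x)\bigr)\,dx\right)\right)d\rho\,d\theta\right).
\]

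The third step uses radial symmetry to simplify the inner line integral. Parametrizing a point of $L(\rho,\theta)$ by its signed arc-length $t$ from the foot of the perpendicular dropped from the origin, its Euclidean norm equals $\sqrt{\rho^2+t^2}$; since $\nu$ depends only on the norm, the inner integral becomes $\int_{\mathbb{R}}\bigl(1-\nu(\sqrt{\rho^2+t^2})\bigr)\,dt$, which does not depend on $\theta$. Carrying out the trivial $\theta$-integration (a factor $\pi$) and using evenness of the integrands in $\rho$ and in $t$ to fold both ranges onto $[0,\infty)$, one collects the constants $2\pi\lambda_R$ and $2\lambda_S$ and arrives at \eqref{eq:PGF_COX}.

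The main obstacle is bookkeeping rather than conceptual depth: one must pin down the normalization of the intensity measure of the line process on $\mathcal{C}$ so that the resulting Cox intensity is $\pi\lambda_R\lambda_S$, consistently with Lemma \ref{lem:COX}, and one must restrict to the stated class of radially symmetric functions $\nu$ (for instance those with $1-\nu$ integrable along every line) so that all of the integrals above are finite and Fubini's theorem legitimately allows swapping the conditional expectation with the product over lines and with the outer expectation over $\Phi_R$.
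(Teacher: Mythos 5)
Your proof is correct and follows essentially the same route as the paper's: condition on the realization of the line process, apply the one-dimensional PPP PGFL on each line, and then take the expectation over the lines using their Poisson structure, with radial symmetry collapsing the angular integration. The only difference is presentational — you invoke the PGFL of the line process directly as a PPP on the half-cylinder $\mathbb{R}\times[0,\pi)$, whereas the paper re-derives that step by expanding the Poisson number of lines hitting a bounded disk of radius $R$, summing the series, and letting $R\to\infty$ via monotone convergence.
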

    \begin{proof}
      See Appendix \ref{App:Cox}.
    \end{proof}
    \begin{figure*}
    \begin{align}
    G_{\phi_S}(\nu) = \exp\left(-2\pi\lambda_R\left(\int_{0}^\infty 1 - \exp\left(-2\lambda_S\int_{0}^\infty 1 - \nu\left(\sqrt{r^2 + t^2}\right)dt\right)\right)dr\right)
    \label{eq:PGF_COX}
    \end{align}
    \hrulefill
    \end{figure*}
    \begin{lemma}
    The PGF for a class of radially symmetric functions $\nu$, of a PPP on a randomly oriented line, passing through a point at a distance $d$ from the origin, is given by \eqref{eq:PGF_line}.
    \begin{figure*}
    \begin{align}
    G_{\phi_i,d}(\nu) = \frac{1}{2\pi}\int_0^{2\pi} \exp\left(-2\lambda_S\int_{0}^{\infty} \left(1- \nu\left(\left(d^2 + t^2 + 2td \cos\theta\right)^{\frac{1}{2}}\right)\right)dt\right)d\theta
    \label{eq:PGF_line}
    \end{align}
    \hrulefill
    \end{figure*}
    \end{lemma}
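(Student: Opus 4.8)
The plan is to obtain the PGFL by conditioning on the orientation of the line and then invoking the classical generating functional of a one-dimensional Poisson point process. Let $O$ denote the origin (the typical user) and $P$ the point on the line at distance $d$ from $O$, and condition on the angle $\theta$ between $\vec{OP}$ and the direction of the line, which by hypothesis is uniform on $[0,2\pi)$. Given $\theta$, the SBSs on the line form a one-dimensional PPP of linear intensity $\lambda_S$, as in the system model. I would then parametrise a generic SBS $X$ on the line by its signed distance $t$ from $P$ and apply the law of cosines in the triangle $OPX$, which gives $\|X\|^{2}=d^{2}+t^{2}+2td\cos\theta$; since $\nu$ is radially symmetric it is therefore evaluated at $\big(d^{2}+t^{2}+2td\cos\theta\big)^{1/2}$.

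Next I would apply the Poisson PGFL (Campbell's theorem for the multiplicative functional of a PPP): conditionally on $\theta$,
\[
\mathbb{E}\!\left[\prod_{X\in\phi_i}\nu(\|X\|)\,\Big|\,\theta\right]=\exp\!\left(-\lambda_S\int_{\mathbb{R}}\Big(1-\nu\big((d^{2}+t^{2}+2td\cos\theta)^{1/2}\big)\Big)\,dt\right).
\]
Splitting the line into the two rays emanating from $P$ (directions $\theta$ and $\theta+\pi$) folds the integral over $t$ onto $[0,\infty)$; because the averaging variable $\theta$ already sweeps the whole circle, the ray in direction $\theta+\pi$ is accounted for by the outer $\tfrac{1}{2\pi}\int_{0}^{2\pi}$, and de-conditioning on $\theta$ then delivers \eqref{eq:PGF_line}. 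This inner computation is the same one already used for the Poisson line Cox process in Lemma~\ref{lem:Cox_PGF}, so it can be cited rather than redone.

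I expect the only genuinely non-routine part to be this last reduction --- writing $\|X\|$ via the law of cosines and reconciling the two half-lines with the uniform orientation average so that the constants ($2\lambda_S$, the range $t\in[0,\infty)$, and $\tfrac{1}{2\pi}\int_{0}^{2\pi}$ over $\theta$) come out consistently. Everything else is the textbook PGFL of a Poisson process and requires no extra machinery.
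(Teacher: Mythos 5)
Your route is the same as the paper's: condition on the uniform orientation $\theta$ of the line through the point at distance $d$, note that the point at (signed) abscissa $t$ along the line lies at distance $\sqrt{(d+t\cos\theta)^2+(t\sin\theta)^2}=(d^2+t^2+2td\cos\theta)^{1/2}$ from the origin (your law-of-cosines step is the same computation in coordinates), apply the Poisson PGFL conditionally on $\theta$, and de-condition. The paper's proof is exactly this, compressed into three lines.

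The one step you flag as non-routine --- folding $\lambda_S\int_{\mathbb{R}}$ onto $2\lambda_S\int_0^{\infty}$ by letting the outer $\theta$-average ``absorb'' the ray in direction $\theta+\pi$ --- does not go through as you justify it. Write $A(\theta)=\int_0^\infty\bigl(1-\nu((d^2+t^2+2td\cos\theta)^{1/2})\bigr)dt$. The exact conditional PGFL is $\exp\bigl(-\lambda_S[A(\theta)+A(\theta+\pi)]\bigr)$, since the two rays from the reference point contribute $A(\theta)$ and $A(\theta+\pi)$ inside the exponent. You may reshuffle the two rays inside the exponent, but not across the outer expectation, because the exponential is nonlinear: in general
\begin{equation*}
\frac{1}{2\pi}\int_0^{2\pi} e^{-\lambda_S[A(\theta)+A(\theta+\pi)]}\,d\theta \;\neq\; \frac{1}{2\pi}\int_0^{2\pi} e^{-2\lambda_S A(\theta)}\,d\theta ,
\end{equation*}
and indeed $e^{-a}e^{-b}\leq \tfrac12(e^{-2a}+e^{-2b})$ shows the right-hand side is an upper bound, with equality only when $A(\theta)=A(\theta+\pi)$. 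That symmetry holds in Lemma~\ref{lem:Cox_PGF}, where the line is parametrized from the foot of the perpendicular and $\sqrt{r^2+t^2}$ is even in $t$, which is why the factor $2\lambda_S$ with $t\in[0,\infty)$ is exact there; here $(d,0)$ is not the foot of the perpendicular, so the two rays are not interchangeable for fixed $\theta$. To land exactly on \eqref{eq:PGF_line} you would need the invalid interchange; the exact statement keeps $\lambda_S\int_{-\infty}^{\infty}$ in the exponent. To be fair, the paper's own proof asserts the folded form without comment, so you have not introduced a new error --- you have surfaced an imprecision already present in the lemma --- but as a derivation of \eqref{eq:PGF_line} as written, the justification you give for the folding is the gap.
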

 \begin{proof}
 Without loss of generality, we assume that the line passes through $(d,0)$ inclined at an angle $\theta$ with the $x$-axis. Thus a point on the line at a distance $t$ from $(d,0)$ is at a distance 
 $
  r = \sqrt{(d + t\cos\theta)^2 + (t\sin \theta)^2},
 $
  from the origin. Taking the PGF along all such points completes the proof.
  \end{proof} 
    \section{Association Probabilities}
    \label{Sec:AP}
   	We assume that the BSs send their control signals in the $\mu$-wave band, due to the higher reliability of $\mu$-wave signals as compared to the mm-wave signals~\cite{shokri2015millimeter}.
For the association, a user compares the $\mu$-wave signals from the strongest LOS and NLOS SBS and MBS. According to our MBS LOS ball assumption, the received power from and LOS MBS is always greater than that received from and NLOS MBS. Accordingly, for association, we consider an NLOS MBS if and only if an LOS MBS is absent.
In case the user is associated with an MBS or an NLOS SBS, it is served in the $\mu$-wave band. Whereas, in case it is associated to an LOS SBS, the user compares the power received in the $\mu$-wave and mm-wave band, and selects the RAT providing the highest power.
\subsection{Tier Selection Probabilities}
In the following analysis, we drop the subscript $\mu$ for ease of notation. The term '$1$' in the subscript refers to the strongest BS of type $tv$. Accordingly, $d_{tv1}$ denotes the distance corresponding to the strongest base station of tier $tv$. Let the pdf of $d_{tv1}$ be denoted by $f_{tv1}(x)$. For, $\{t,v\} = \{SN\}$, $f_{tv1}(x)$ is given by \eqref{eq:pdf_ds1}. Whereas, for $\{t,v\} \neq \{SN\}$, the expressions for $f_{tv1}$, can easily be obtained by differentiating the void probabilities of the corresponding processes~\cite{chiu2013stochastic} : 
\begin{align}
f_{SL1}(x) &= 2\lambda_S \exp\left(-2\lambda_S x\right) \nonumber \\
f_{ML1}(x) &= 2\pi\lambda_Mx\exp\left(-\pi\lambda_Mx^2\right); \quad x < D_M \nonumber \\
f_{MN1}(x) &= 2\pi\lambda_Mx\exp\left(-\pi\lambda_M\left(x^2-D_M^2\right)\right); \hspace*{0.1cm} x \geq D_M \nonumber
\end{align}

\begin{lemma}
The tier selection probability of a user with a LOS and NLOS MBS and LOS SBS is given by \eqref{eq:Asso_MBS},
\begin{figure*}
\begin{align}
\mathbb{P}_{ML} &= 2\lambda_SW_1\mathbb{E}_{d_{SN1}}\left[1 - \exp\left(-\pi\lambda_M\left(\frac{P_S}{P_M}\right)\right)^{-\frac{2}{\alpha_{ML}}} d_{S1}^{\frac{2\alpha_{SN\mu}}{\alpha_{ML}}}\right]\int_0^\infty \left(W_{ML} \exp(-2\lambda_S x)\right)dx, \nonumber \\
\mathbb{P}_{MN} &= 2\lambda_S(1-W_1)\mathbb{E}_{d_{SN1}}\left[1 - \exp\left(-\pi\lambda_M\left(\frac{P_S}{P_M}\right)\right)^{-\frac{2}{\alpha_{MN}}} d_{S1}^{\frac{2\alpha_{SN\mu}}{\alpha_{MN}}}\right]\int_0^\infty \left(W_{MN}\exp(-2\lambda_S x)\right)dx, \label{eq:Asso_MBS}\\
\mathbb{P}_{SL} &= 2\lambda_S W_2\left( W_1 \int_0^\infty \left(1-W_{ML}\right) \exp(-2\lambda_S x)dx + \left(\int_0^\infty \left(1-W_{MN}\right)\exp(-2\lambda_S x)dx\right) \left(1 - W_1\right)\right) \nonumber
\end{align}
\hrulefill
\end{figure*}
where,
\begin{align}
W_{ML} &= 1 - \exp\left(-\pi\lambda_M\left(\frac{P_S}{P_M}\right)^{-\frac{2}{\alpha_{ML}}}x^{\frac{2\alpha_{SL\mu}}{\alpha_{ML}}}\right),\nonumber \\
W_{MN} &= 1 - \exp\left(-\pi\lambda_M\left(\frac{P_S}{P_M}\right)^{-\frac{2}{\alpha_{MN}}}x^{\frac{2\alpha_{SL\mu}}{\alpha_{MN}}}\right),\nonumber \\ 
W_1 &= \mathbb{E}[\mathds{1}(ML)] = 1 - \exp(-\pi\lambda_MD_M^2),\nonumber \\
W_2  &= \mathbb{E}_{d_{SN1}}\left[1-\exp\left(-2\mu d_{SN1}^{\frac{\alpha_{SN\mu}}{\alpha_{SL\mu}}}\right)\right].\nonumber
\end{align}
Here, $\mathds{1}(.)$ is the indicator function, and accordingly, $\mathbb{E}[\mathds{1}(ML)]$ denotes the probability that at least one LOS MBS exists.
\label{lem:asso}
\end{lemma}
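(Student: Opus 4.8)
The plan is to turn the association rule into a set of ordering inequalities on the $\mu$-wave received powers, condition on the contact distances of the point processes involved, and evaluate the resulting probabilities with the void probabilities furnished by Lemmas~\ref{lem:nearpt}--\ref{lem:COX}. Write the strongest contribution of class $tv$, at its contact distance $d_{tv1}$, as $P_t K_{tv\mu}\, d_{tv1}^{-\alpha_{tv\mu}}$. By the LOS-ball model the strongest MBS is an LOS MBS precisely on the event $\mathds{1}(ML)$ that $\phi_M$ has a point inside the disk of radius $D_M$, of probability $W_1 = 1 - e^{-\pi\lambda_M D_M^2}$, and an NLOS MBS on the complement; I would split on this dichotomy, so that $\mathbb{P}_{ML}$ and $\mathbb{P}_{MN}$ come out of a single computation with $(\alpha_{ML}, f_{ML1})$ replaced by $(\alpha_{MN}, f_{MN1})$ and the weight $W_1$ by $1-W_1$. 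On $\mathds{1}(ML)$, for instance, the user associates with the LOS MBS iff its power beats both $P_S K_{SL\mu} d_{SL1}^{-\alpha_{SL\mu}}$ and $P_S K_{SN\mu} d_{SN1}^{-\alpha_{SN\mu}}$, with the LOS SBS iff $P_S K_{SL\mu} d_{SL1}^{-\alpha_{SL\mu}}$ beats both of the other two, and with the NLOS SBS otherwise.

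Each pairwise comparison ``class $a$ beats class $b$'' rewrites as $d_{a1} < c_{ab}\, d_{b1}^{\alpha_{b\mu}/\alpha_{a\mu}}$ for a constant $c_{ab}$ explicit in the $P_t$ and $K_{tv\mu}$, which is the source of every exponent appearing in $W_{ML}, W_{MN}, W_2$. I would then integrate over the contact distances: $d_{SL1}$ has density $f_{SL1}(x) = 2\lambda_S e^{-2\lambda_S x}$ (the contact distance of the $1$-D PPP on the typical user's own road), $d_{SN1}$ has density $f_{d_{S1}}$ from Lemma~\ref{lem:nearpt}, and every MBS comparison collapses to the void probability $e^{-\pi\lambda_M \rho^2}$ of $\phi_M$ in a disk of radius $\rho$. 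This identifies the pieces of \eqref{eq:Asso_MBS}: $W_{ML}$ and $W_{MN}$ are the probabilities that some LOS, resp.\ NLOS, MBS beats an LOS SBS at distance $x$, i.e.\ $1 - e^{-\pi\lambda_M\rho_{ML}(x)^2}$ with $\rho_{ML}(x)^2 \propto x^{2\alpha_{SL\mu}/\alpha_{ML}}$; the bracketed $\mathbb{E}_{d_{SN1}}$-term is the corresponding probability against the nearest NLOS SBS (hence the exponent $2\alpha_{SN\mu}/\alpha_{ML}$); and $W_2 = \mathbb{E}_{d_{SN1}}[1 - e^{-2\mu d_{SN1}^{\alpha_{SN\mu}/\alpha_{SL\mu}}}]$ is the probability that the nearest LOS SBS beats the nearest NLOS SBS, a contact-distance comparison on the in-road $1$-D PPP (the constant $\mu$ collecting the linear SBS density and the path-loss coefficients). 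Multiplying the relevant ``beats''/``loses'' factors, integrating the LOS-SBS ones as $2\lambda_S\int_0^\infty(\cdot)\,e^{-2\lambda_S x}\,dx$, averaging the NLOS-SBS ones over $d_{SN1}$, and prepending the MBS-regime weight $W_1$ or $1-W_1$, assembles the three displayed expressions; the remaining case is the complement $\mathbb{P}_{SN} = 1 - \mathbb{P}_{ML} - \mathbb{P}_{MN} - \mathbb{P}_{SL}$.

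The main obstacle is the dependence among the events one intersects. On the MBS side, ``beats the LOS SBS'' and ``beats the NLOS SBS'' are governed by the same $\phi_M$ — their intersection is ``a point of $\phi_M$ inside the \emph{smaller} of the two radii'', not the product of the two probabilities — and $\mathds{1}(ML)$ is itself correlated with each; on the SBS side, $d_{SL1}$ (carried by the origin line) and $d_{SN1}$ (carried by the rest) must be decoupled. The decoupling of $d_{SL1}$ and $d_{SN1}$ I would get from Lemma~\ref{lem:COX}, which under the Palm distribution exhibits the SBS process as the superposition of the independent in-road $1$-D PPP through the origin and the remaining planar Poisson line Cox component; the independence of $\phi_M$ from the SBS process is immediate. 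The price paid for the closed form \eqref{eq:Asso_MBS} is to treat the residual MBS-side couplings as independent multiplicative factors — in particular to factor ``a strong LOS MBS exists'' out as $W_1$ and ``beats the LOS SBS''/``beats the NLOS SBS'' as separate factors, equivalently to neglect the $D_M$-truncation of the disks of radii $\rho_{ML}, \rho_{MN}$ — and this approximation, which I would state explicitly, is the step I expect to need the most care.
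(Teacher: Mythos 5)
Your proposal follows essentially the same route as the paper's proof: condition on the existence of an LOS MBS (weight $W_1$ or $1-W_1$), reduce each pairwise power comparison to a void-probability computation over the relevant contact distance, multiply the resulting factors, and obtain $\mathbb{P}_{SN}$ as the complement. Your explicit flagging of the MBS-side dependence --- that ``beats the LOS SBS'' and ``beats the NLOS SBS'' both query the same $\phi_M$ yet are factored as if independent --- is a step the paper's proof performs silently, so your version is if anything more careful about where the closed form becomes an approximation.
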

\begin{proof}
See Appendix \ref{App:asso}.
\end{proof}
\begin{lemma} Given that a user is associated to a tier $t$ of visibility state $v$, the probability density function (pdf) of the distance of the serving BS is given by:
        \begin{equation}
\hat{f}_{ tv1}(x) = \frac{{f}_{ tv1}(x)}{\mathbb{P}_{tv}}\prod_{\forall (t'v' \neq tv)}\mathbb{P}(\phi_{t'v'} \cap (0,x) = 0),
\label{eq:cond_dist}
\end{equation}
\label{Lem:TagBS}
\end{lemma}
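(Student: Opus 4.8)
The plan is to obtain \eqref{eq:cond_dist} by a direct conditioning (Bayes) argument on the serving-BS distance. Once the user is tied to class $tv$, the serving BS is by construction the one of type $tv$ with the strongest $\mu$-wave received power; since $l_{tv\mu}(\cdot)=K_{tv\mu}(\cdot)^{-\alpha_{tv\mu}}$ is strictly decreasing in distance, this is also the nearest $tv$-BS, so its distance is exactly $d_{tv1}$, with unconditional density $f_{tv1}$ (given by \eqref{eq:pdf_ds1} for $\{t,v\}=\{S,N\}$ and by the void-probability derivatives listed in Section~\ref{Sec:AP} otherwise). Hence
\[
\hat f_{tv1}(x)\,dx \;=\; \mathbb{P}\!\left(d_{tv1}\in[x,x+dx]\ \middle|\ \text{assoc.\ to }tv\right) \;=\; \frac{\mathbb{P}\!\left(d_{tv1}\in[x,x+dx],\ \text{assoc.\ to }tv\right)}{\mathbb{P}_{tv}},
\]
and I would write the numerator as $f_{tv1}(x)\,dx$ multiplied by the probability that, conditioned on a $tv$-BS placed at distance $x$, no other class delivers a larger received power.

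Next I would factorize that conditional probability across the competing classes $t'v'\neq tv$ using independence. The four patterns $\phi_{ML},\phi_{MN},\phi_{SL},\phi_{SN}$ are mutually independent: $\phi_{ML}$ and $\phi_{MN}$ are the restrictions of the MBS PPP $\phi_M$ to the LOS ball $b(O,D_M)$ and its complement, hence independent by the independence-on-disjoint-sets property of Poisson processes; under the Palm distribution of the Cox process, $\phi_{SL}$ is the independent $1$D PPP on the line through $O$ and $\phi_{SN}$ is the remaining (independent) Poisson Line Cox Process, by Lemma~\ref{lem:COX}; and $\phi_M\perp\phi_S$ by construction. Therefore the event ``$tv$ beats every other class'' splits into the product $\prod_{t'v'\neq tv}\mathbb{P}(\text{class }t'v'\text{ does not beat a }tv\text{-BS at distance }x)$.

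Each factor is then recognized as a void probability. For a competing class $t'v'$, ``not beating a $tv$-BS at distance $x$'' means its nearest BS lies outside the exclusion region $(0,x)$, understood as the ball whose radius is the effective distance at which $P_{t'}K_{t'v'\mu}\rho^{-\alpha_{t'v'\mu}}=P_{t}K_{tv\mu}x^{-\alpha_{tv\mu}}$ (the same change of variables that produces $W_{ML},W_{MN},W_2$ in Lemma~\ref{lem:asso}); this probability is exactly $\mathbb{P}(\phi_{t'v'}\cap(0,x)=0)$, with the MBS rule ``consider an NLOS MBS only if no LOS MBS exists'' carried by the factor $W_1=\mathbb{E}[\mathds{1}(ML)]$ already present in $\mathbb{P}_{tv}$. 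Substituting into the Bayes expression and cancelling the common $dx$ yields \eqref{eq:cond_dist}; as a consistency check, integrating the right-hand side over $x$ returns $1$ because $\int_0^\infty f_{tv1}(x)\prod_{t'v'\neq tv}\mathbb{P}(\phi_{t'v'}\cap(0,x)=0)\,dx$ is precisely the total-probability decomposition of $\mathbb{P}_{tv}$ used in Lemma~\ref{lem:asso}.

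The step I expect to be the main obstacle is the rigorous justification of the mutual independence of the four thinned processes — especially the Palm/Slivnyak reduction for the Cox process that decouples the user's own road ($\phi_{SL}$) from the rest of the network ($\phi_{SN}$), where Lemma~\ref{lem:COX} must be applied at the typical user rather than at a typical point of $\phi_S$ — together with the careful bookkeeping that turns the strongest-received-power association rule into the single clean distance threshold hidden inside the notation $(0,x)$. Once those two points are settled, the remainder is routine.
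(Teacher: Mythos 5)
Your argument is correct and is essentially the approach the paper relies on: the paper states Lemma~\ref{Lem:TagBS} without proof, and your Bayes-conditioning derivation (nearest $=$ strongest within a class, independence of the four thinned processes, each competing class contributing the void probability of its power-equivalent exclusion region) is exactly the standard argument that also underlies the computation of $\mathbb{P}_{tv}$ in Lemma~\ref{lem:asso}, as your normalization check confirms. The two subtleties you flag — applying the Palm decomposition of Lemma~\ref{lem:COX} at the typical user rather than at a point of $\phi_S$, and the bookkeeping of the LOS/NLOS MBS priority rule inside the notation $\mathbb{P}(\phi_{t'v'}\cap(0,x)=0)$ — are real, but the paper glosses over them in the same way, so your treatment is at least as careful as the source.
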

\subsection{RAT Selection Probability}
In case of LOS SBS association, the user selects $\mu$-wave or mm-wave RAT by comparing the received power from the selected SBS in these two bands. 
\begin{lemma}
The conditional mm-wave selection probability, given that it is associated with an LOS SBS is given by:
\begin{align}
\mathbb{P}_m = \exp\left(-2\lambda_S \left(\frac{K_\mu}{K_mG_0}\right)^{\frac{1}{\alpha_{SL\mu} - \alpha_{SLm}}} \right) \nonumber
\end{align}
\end{lemma}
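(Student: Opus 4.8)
The plan is to reduce the RAT comparison to a single threshold on the distance to the serving SBS and then evaluate the corresponding void probability on the typical user's own road. Throughout, the serving base station is fixed by the association rule of Section~\ref{Sec:AP}: conditioned on LOS-SBS association, the user is served by the SBS on its own road that delivers the largest $\mu$-wave received power, and since $l_{SL\mu}(d)=K_\mu d^{-\alpha_{SL\mu}}$ is strictly decreasing in $d$, this is simply the nearest such SBS, at distance $d_{SL1}$. By construction the SBSs on a road form a $1$D PPP of intensity $\lambda_S$, and since the typical user lies on a road (the line-Cox structure of Lemma~\ref{lem:COX}), $d_{SL1}$ has the nearest-neighbour density $f_{SL1}(x)=2\lambda_S e^{-2\lambda_S x}$ recalled before Lemma~\ref{lem:asso}, the factor $2$ accounting for the two half-lines of the road.

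Next I would write the RAT-selection event explicitly and solve it. Since selection is based on average received power, mm-wave is chosen for the serving SBS iff $G_0 P_S K_m d_{SL1}^{-\alpha_{SLm}} > P_S K_\mu d_{SL1}^{-\alpha_{SL\mu}}$. Cancelling $P_S$ and rearranging gives $d_{SL1}^{\alpha_{SL\mu}-\alpha_{SLm}} > \frac{K_\mu}{K_m G_0}$, and since $\alpha_{SL\mu}>\alpha_{SLm}$ the exponent is positive, so taking the $(\alpha_{SL\mu}-\alpha_{SLm})$-th root preserves the inequality and yields the equivalent condition $d_{SL1} > d_m$ with $d_m := \left(\frac{K_\mu}{K_m G_0}\right)^{1/(\alpha_{SL\mu}-\alpha_{SLm})}$. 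Hence, given LOS-SBS association, the mm-wave selection event coincides with $\{\phi_{SL}\cap(0,d_m)=\emptyset\}$, i.e. no SBS lies within $d_m$ along the user's road. The probability then follows from the $1$D void probability: $\mathbb{P}_m=\mathbb{P}(d_{SL1}>d_m)=\int_{d_m}^{\infty}2\lambda_S e^{-2\lambda_S x}\,dx=e^{-2\lambda_S d_m}$, which is the claimed expression.

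The step I expect to require the most care is the conditioning itself. Strictly, conditioning on LOS-SBS association tilts the law of $d_{SL1}$ towards smaller values — in general it is the tier-conditioned density $\hat f_{SL1}$ of Lemma~\ref{Lem:TagBS}, not the bare nearest-neighbour density — so obtaining the clean closed form $e^{-2\lambda_S d_m}$ relies on the fact that, once the serving SBS is fixed, the RAT decision depends on the geometry only through $d_{SL1}$ and decouples from the inter-tier $\mu$-wave comparisons that produced the association; equivalently, among users served by an LOS SBS the ``mm'' and ``$\mu$'' sub-events merely partition outcomes according to $d_{SL1}\gtrless d_m$ weighted by the nearest-neighbour law of the road PPP. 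Once this decoupling is made explicit, everything else is the one-line integral above, and the lemma follows.
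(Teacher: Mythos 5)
Your proof is correct and follows essentially the same route as the paper's: reduce the power comparison $K_m G_0 P_S d_{SL1}^{-\alpha_{SLm}} > K_\mu P_S d_{SL1}^{-\alpha_{SL\mu}}$ to the distance threshold $d_{SL1} > \left(K_\mu/(K_m G_0)\right)^{1/(\alpha_{SL\mu}-\alpha_{SLm})}$ and evaluate it via the void probability of the on-road $1$D PPP. The conditioning subtlety you flag at the end is real --- strictly, conditioning on LOS-SBS association tilts the law of $d_{SL1}$ toward the density $\hat f_{SL1}$ of Lemma~\ref{Lem:TagBS} rather than the bare nearest-neighbour law --- but the paper's own proof silently uses the unconditional void probability just as you do, so your treatment matches (and is more candid about) the paper's.
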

 \begin{proof}
 We have :
\begin{align}
\mathbb{P}_m &= \mathbb{P}(r = mm|t = SL) \nonumber \\
&= \mathbb{P}(K_mG_0 P_S d_{SL1}^{-\alpha_{SLm}} > K_\mu P_S d_{SL1}^{-\alpha_{SL\mu}}) \nonumber \\
& = \mathbb{P}\left(d_{SL} > \left(\frac{K_\mu}{K_mG_0}\right)^{\frac{1}{\alpha_{SL\mu} - \alpha_{SLm}}}\right). \nonumber
\end{align}
Taking the void probability completes the proof.
\end{proof}
The overall association probability of the typical user is given by $\mathbb{P}_{tvr} = \mathbb{P}_{tv}\mathbb{P}_m$ where, the term $\mathbb{P}_m$ is considered only in case of association with a base station of type $SL$. In case of other tiers, we have exclusively, $r = \mu$.
\section{SINR Coverage Probabilities}
According to the derived association probabilities, the SINR coverage probability is obtained as:
\label{Sec:SCP}
\begin{theorem}
The conditional SINR coverage probability, given that the typical user is associated to a BS to type 'tv' in $\mu$-wave and mm-wave are given by \eqref{eq:SINR1} and \eqref{eq:SINR2}, respectively,
\begin{figure*}
\begin{align}
\label{eq:SINR1}
\mathbb{P}\left(SINR_{tv\mu} \geq \gamma \right) = 
\begin{cases} 
\mathbb{E}\left[\exp\left(-\frac{\gamma \sigma^2_{\mu}}{P_S K_\mu  d_{tv1}^{-\alpha_{SL\mu}}}\right)\right] \cdot  \prod\limits_{\substack{\{t'v'\}\\\neq \{tv\}}} \mathbb{E}_{d_{tv1}}\left[G_{\phi_{t'v'}}\left(\frac{P_{t}||x||^{\alpha_{t'v'}}}{P_{t}||x||^{\alpha_{t'v'}} + \gamma P_t'd_{tv1}^{\alpha_{tv}}}\right)\right] \cdot \\ 
\cdot \mathbb{E}_{d_{tv1}}\left[G^{tv1}_{\phi_{tv}}\left(\frac{||x||^{\alpha_{tv}}}{||x||^{\alpha_{tv}} + \gamma d_{tv1}^{\alpha_{tv}}}\right)\right] ; \qquad \forall \{tv\} \neq \{SN\} 
\\
 \mathbb{E}\left[\exp\left(-\frac{\gamma \sigma^2_{\mu}}{P_S K_\mu  d_{SN1}^{-\alpha_{SL\mu}}}\right)\right]  \cdot \prod\limits_{\substack{\{t'v'\}\\\neq \{SN\}}} \mathbb{E}_{d_{SN1}}\left[G_{\phi_{t'v'}}\left(\frac{P_{S}||x||^{\alpha_{t'v'}}}{P_{S}||x||^{\alpha_{t'v'}} + \gamma P_t'd_{SN1}^{\alpha_{SN}}}\right)\right] \cdot  \\ \cdot
\mathbb{E}_{d_{SN1}}\left[G^{SN1}_{\phi_{SN}}\left(\frac{||x||^{\alpha_{SN}}}{||x||^{\alpha_{SN}} + \gamma d_{SN1}^{\alpha_{SN}}}\right)\right]\cdot \mathbb{E}_{d_{SN1}}\left[G^{SN1}_{\phi_{i},d_{SN1}}\left(\frac{||x||^{\alpha_{SN}}}{||x||^{\alpha_{SN}} + \gamma d_{SN1}^{\alpha_SN}}\right)\right]; \quad \mbox{otherwise}.
\end{cases}
\end{align}
\begin{align}
\mathbb{P}\left(SINR_{SLm} \geq \gamma \right) =  \sum_{n = 1}^{n_0}\left(-1\right)^{n+1} \binom {n_0}n\mathbb{E}_{d_{SL1}}&\left[\exp\left(- \frac{n\gamma \sigma_{mm}^2}{K_mP_Sd_{SL1}^{-\alpha_{SLm}}G_0}\right)\right]  \mathbb{E}\left[\left( \frac{d_{SL2}^{\alpha_{SLm}}}{d_{SL2}^{\alpha_{SLm}} + \gamma p_G d_{SL1}^{\alpha_{SLm}}}\right)\right]\label{eq:SINR2}
\end{align}
\hrulefill
\end{figure*}
where, the expectations with respect to the serving BS distance $d_{tv1}$ is taken as per Lemma \ref{Lem:TagBS}. $G_{\phi}$ and $G_{\phi}^{y}$ refer to the PGF w.r.t. the process $\phi$, and the PGF w.r.t. $\phi$ taken according to the reduced Palm distribution with the first point at $y$, respectively.
\label{theo:SINR}
\end{theorem}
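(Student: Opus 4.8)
The plan is to treat the two RAT branches of the statement separately, and in each branch to condition on the serving base station, convert the coverage event into a fading-threshold event, and then evaluate the Laplace functional of the aggregate interference via the probability generating functionals collected in Section~\ref{Sec:Prel}. For the $\mu$-wave branch with a tagged tier-$\{tv\}$ BS, whose distance $d_{tv1}$ has the conditional density $\hat f_{tv1}$ of Lemma~\ref{Lem:TagBS}, the desired-link fading $h$ is unit-mean exponential, so $\{SINR_{tv\mu}\ge\gamma\}$ is equivalent to $\{h\ge s(\sigma_\mu^2+I)\}$ with $s=\gamma d_{tv1}^{\alpha_{tv}}/(P_tK_\mu)$; integrating over $h$ first gives the noise exponential $\exp(-s\sigma_\mu^2)$ times the interference Laplace transform $\mathcal{L}_I(s)$. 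Mutual independence of the per-tier point processes and of their i.i.d.\ Rayleigh marks makes $\mathcal{L}_I(s)$ factor over tiers, and averaging each interferer's fading turns a generic factor $1/(1+sP_{t'}K_{t'v'}\|x\|^{-\alpha_{t'v'}})$ into exactly the PGF argument displayed in \eqref{eq:SINR1}; a non-serving tier $\{t'v'\}$ thus contributes $G_{\phi_{t'v'}}(\cdot)$ — the ordinary PPP PGF for the macro tiers, and \eqref{eq:PGF_COX} of Lemma~\ref{lem:Cox_PGF} for $\phi_{SN}$. For the serving tier the interferers are the atoms of $\phi_{tv}$ other than the tagged one, so one must use the reduced Palm version — Slivnyak for the PPP/LOS-SBS tiers, Lemma~\ref{lem:COX} for the Cox tier — which is precisely what the notation $G^{tv1}_{\phi_{tv}}$ records; de-conditioning on $d_{tv1}$ with $\hat f_{tv1}$ then yields the first case of \eqref{eq:SINR1}.

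The $\{tv\}=\{SN\}$ case requires the additional structure that is the core of the argument: even when the tagged BS is an NLOS SBS, the typical user still lies on its own road, so by Lemma~\ref{lem:COX} the Palm distribution of the SBS Cox process decomposes into the original Cox process, an independent $1$D PPP on the line through the origin, and the tagged atom. After deleting the tagged atom, the residual own-tier interference therefore splits into a genuine Cox contribution, accounted for by $G^{SN1}_{\phi_{SN}}$ through \eqref{eq:PGF_COX}, and a $1$D contribution carried by the user's own line, accounted for by the line PGF \eqref{eq:PGF_line}; this is the origin of the extra factor $G^{SN1}_{\phi_i,d_{SN1}}$ in the second case of \eqref{eq:SINR1}. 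All remaining steps mirror the generic case, with every quantity written in $d_{SN1}$ and averaged with $\hat f_{SN1}$ from \eqref{eq:cond_dist}.

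For the mm-wave branch only LOS-SBS association is relevant, and the dominant-interferer model of Lemma~\ref{lem:p_G} reduces the interference to the single nearest neighbouring SBS, at distance $d_{SL2}$, which spills over — i.e.\ is active — with probability $p_G$ given by \eqref{eq:mm_inter}. The desired-link Nakagami fading power $g$ (integer parameter $n_0$) satisfies $\mathbb{P}(g<u)\approx(1-e^{-u})^{n_0}=\sum_{n=1}^{n_0}(-1)^{n+1}\binom{n_0}{n}e^{-nu}$ (Alzer's bound, used as in~\cite{bai2015coverage,7593259}); applying this with $u=\gamma(\sigma_{mm}^2+I)d_{SL1}^{\alpha_{SLm}}/(K_mP_SG_0)$, and then averaging the activity indicator and the (exponential) interferer fading, produces for each $n$ the noise exponential $\mathbb{E}_{d_{SL1}}[\exp(-n\gamma\sigma_{mm}^2 d_{SL1}^{\alpha_{SLm}}/(K_mP_SG_0))]$ times the exact factor $(1-p_G)+p_G\,d_{SL2}^{\alpha_{SLm}}/(d_{SL2}^{\alpha_{SLm}}+n\gamma d_{SL1}^{\alpha_{SLm}})$. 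Collecting the alternating sum, averaging over $d_{SL2}$, and invoking the first-order simplification $(1-p_G)+p_G/(1+w)\approx 1/(1+p_G w)$ — in the same spirit as the dominant-interferer modelling already adopted — together with the corresponding suppression of the $n$-dependence in this factor, yields the displayed expression \eqref{eq:SINR2}.

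The main obstacle is the NLOS-SBS case: one has to invoke the Palm characterisation of the Poisson line Cox process (Lemma~\ref{lem:COX}) correctly, separating the self-interference into a true Cox part and a $1$D-line part and matching them to the reduced-Palm PGFs \eqref{eq:PGF_COX} and \eqref{eq:PGF_line}, while keeping the conditioning consistent with the tagged-distance law \eqref{eq:cond_dist}, which already enforces that no nearer BS of any other tier is present; this bookkeeping, rather than any single estimate, is the delicate part of the proof.
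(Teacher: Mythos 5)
Your proof follows essentially the same route as the paper's: condition on the tagged-BS distance via Lemma~\ref{Lem:TagBS}, use the exponential (resp.\ Nakagami with Alzer's lemma) tail for the desired link, factor the interference Laplace transform into per-tier PGFs, invoke the Palm decomposition of the Cox process (Lemma~\ref{lem:COX}) to produce the extra line-process factor $G^{SN1}_{\phi_i,d_{SN1}}$ in the NLOS-SBS case, and apply the dominant-interferer model for mm-wave. The only divergence is minor: the paper inserts $p_G$ directly as a multiplicative scaling of the neighbouring SBS's interference power before taking the fading Laplace transform, whereas you treat $p_G$ as a Bernoulli activity probability and recover the same factor through a first-order approximation --- a slightly more explicit bookkeeping of the same heuristic, and you correctly flag the suppression of the $n$-dependence in the interference factor of \eqref{eq:SINR2}.
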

\begin{proof}
See Appendix \ref{App:SINR}.
\end{proof}
Finally, the overall coverage probability is calculated as:
\begin{equation}
	\mathbb{P}_C(\gamma) = \!\!\!\!\!\!\!\!\!\!\!\!\sum\limits_{t\in \{M, S\},\;v\in \{L,N\},\;r\in \{\mu, m\}}\!\!\!\!\!\!\!\!\!\!\!\!\mathbb{P}(SINR_{t,v,r}>\gamma|t,v,r)\mathbb{P}_{tvr}, \nonumber
    \label{eq:CovProb}
	\end{equation}
    where $r = m$ is considered only in case of $\{tv\} = \{SL\}$.
    \section{Numerical Results and Discussion}
    In this section, we provide some numerical results to discuss the salient trends of the network. We assume transmit powers of $P_M = 45$ dBm and $P_S = 30$ dBm. Parameters $K_{tvr}$ are derived from 3GPP UMa model for $\mu$-wave MBSs, Umi model for $\mu$-wave SBSs \cite{36.814}, and Umi model for mm-wave data transmission in SBSs~\cite{38.900}. The path-loss exponents are assumed to be $\alpha_{tNr} = 4$ and $\alpha_{tLr} = 2$ for the NLOS and LOS base stations. Furthermore, we assume a bandwidth of 20 MHz and 1 GHz for $\mu$-wave and mm-wave, respectively. The LOS ball for the macro tier is assumed to be 200 m and the MBS density is assumed to be $\lambda_S = 1$ km$^{-2}$.
    \subsection{Simplifying Approximations and Validation of the Model}
     \begin{figure}[t]
\centering
\includegraphics[width=8cm,height = 4cm]{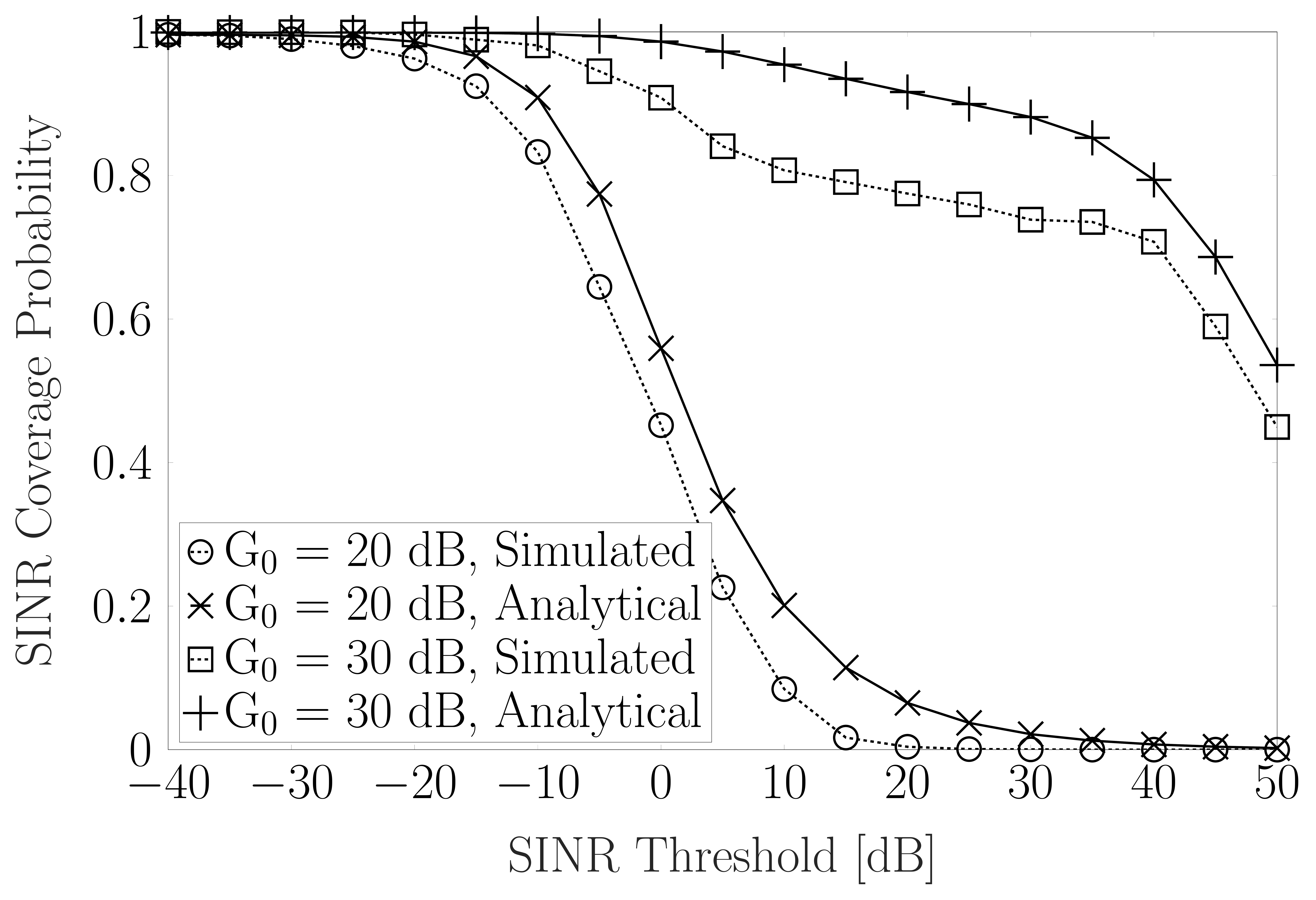}
\caption{Validation of the analytical model for SINR coverage probability, $\lambda_S = 0.1$~m$^{-1}, \lambda_R = 1e-5$~m$^{-2}$.}
\label{fig:SINR_valid}
\end{figure}

 \begin{figure}[t]
\centering
\includegraphics[width=8cm,height = 4cm]{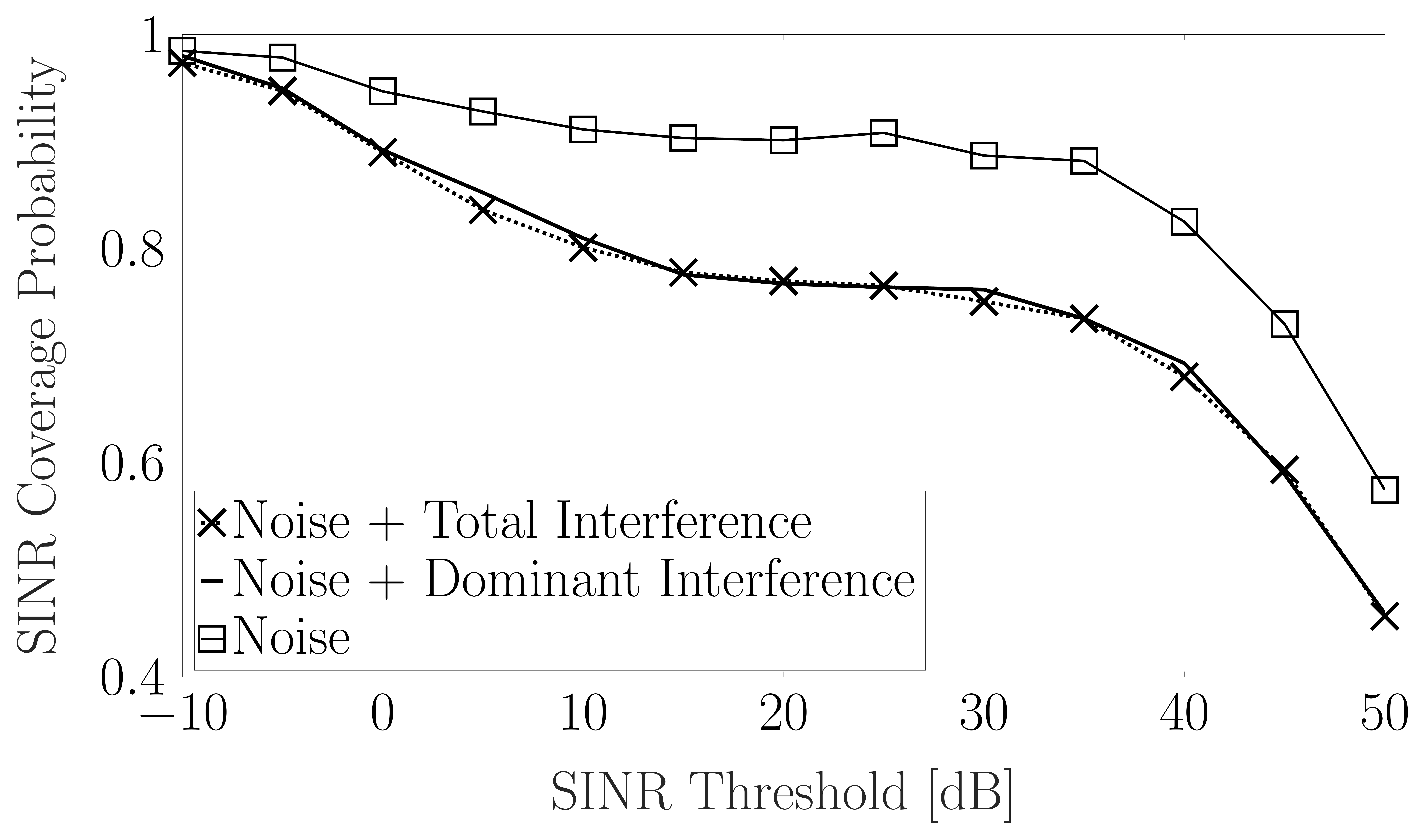}
\caption{Validation of the mm-wave interference model, $\lambda_S = 0.1$~m$^{-1}$, $G_0 = 30$ dB.}
\label{fig:Dominant_valid}
\end{figure}
The last integral of \eqref{eq:pdf_ds1} does not have a closed form. Consequently, we simplify the evaluation by expanding the exponential term in the numerator, i.e., $\exp(-2\lambda_S\sqrt{x^2-r^2})$, with a power series, and evaluating each of the resulting integral terms separately. Furthermore, we use Newton–Cotes quadrature rule \cite{abramowitz1964handbook} to evaluate the outer integral of \eqref{eq:PGF_COX}, as obtaining a closed form is not straightforward.
To validate these approximations, in Fig. \ref{fig:SINR_valid}, we compare the SINR coverage probability obtained using our analytical framework with Monte Carlo simulations. We observe that the analytical results agree appreciably with the simulations. Specifically, we observe that the analytical results provide a tight upper bound to the simulations.

Furthermore, we also validate our assumption of the dominant interferer model to characterize the mm-wave interference (Section~\ref{Sec:Prel} A). In Fig.~\ref{fig:Dominant_valid}, we use Monte Carlo simulations to compare the actual SINR coverage probability of the typical user with that obtained by considering the interference only from dominant user, and the one considering a noise limited model. We see that the noise limited model is not a true representation of the actual SINR characteristics, whereas, the dominant interferer model quite accurately matches with the actual SINR coverage probability. Thus, the dominant interferer model can be used to represent the mm-wave interference.

    \label{Sec:NRD}
    \subsection{Association and RAT Selection Probabilities}
    For the typical user, the perceived SBS density depends on both $\lambda_R$ and $\lambda_S$. However, the effects of $\lambda_R$ and $\lambda_S$ are quite different. In Fig. \ref{fig:Asso_SBS} we plot $\mathbb{P}_{SL}$ and $\mathbb{P}_{SN}$. As $\lambda_S$ increases for a given $\lambda_R$, the LOS SBS association probability increases. This is due to the fact that with increasing $\lambda_S$, the distance to the nearest SBS decreases. Although the number of NLOS SBSs also increases with increasing $\lambda_S$, their proximity to the typical user do not necessarily decrease significantly due to the fixed $\lambda_R$. 
  On the contrary, with increasing $\lambda_R$, with increasing $\lambda_R$, we observe that $\mathbb{P}_{SN}$ increases (see Fig. \ref{fig:Asso_road}. This is due to the decreasing proximity of NLOS SBSs with increasing $\lambda_R$.

In Fig. \ref{fig:RAT} we plot the conditional mm-wave selection probability with respect to $\lambda_S$, given that the typical user has selected a LOS SBS. We observe that increasing $G_0$ has a more pronounced effect on the mm-wave RAT selection than increasing $\lambda_S$. For $G_0 =  26$ dB, $\lambda_S =  20$/km$^2$ ensures mm-wave service. Whereas, with 25 dB, the operator needs to have $\lambda_S = 100$/km$^2$ (5 fold increase). Thus, increasing the antenna gains in the transmitter and/or receiver is a more effective way of prioritizing mm-wave selection, than deploying more SBSs.
    
\begin{figure}[t]
\centering
\includegraphics[width=8cm,height = 4cm]{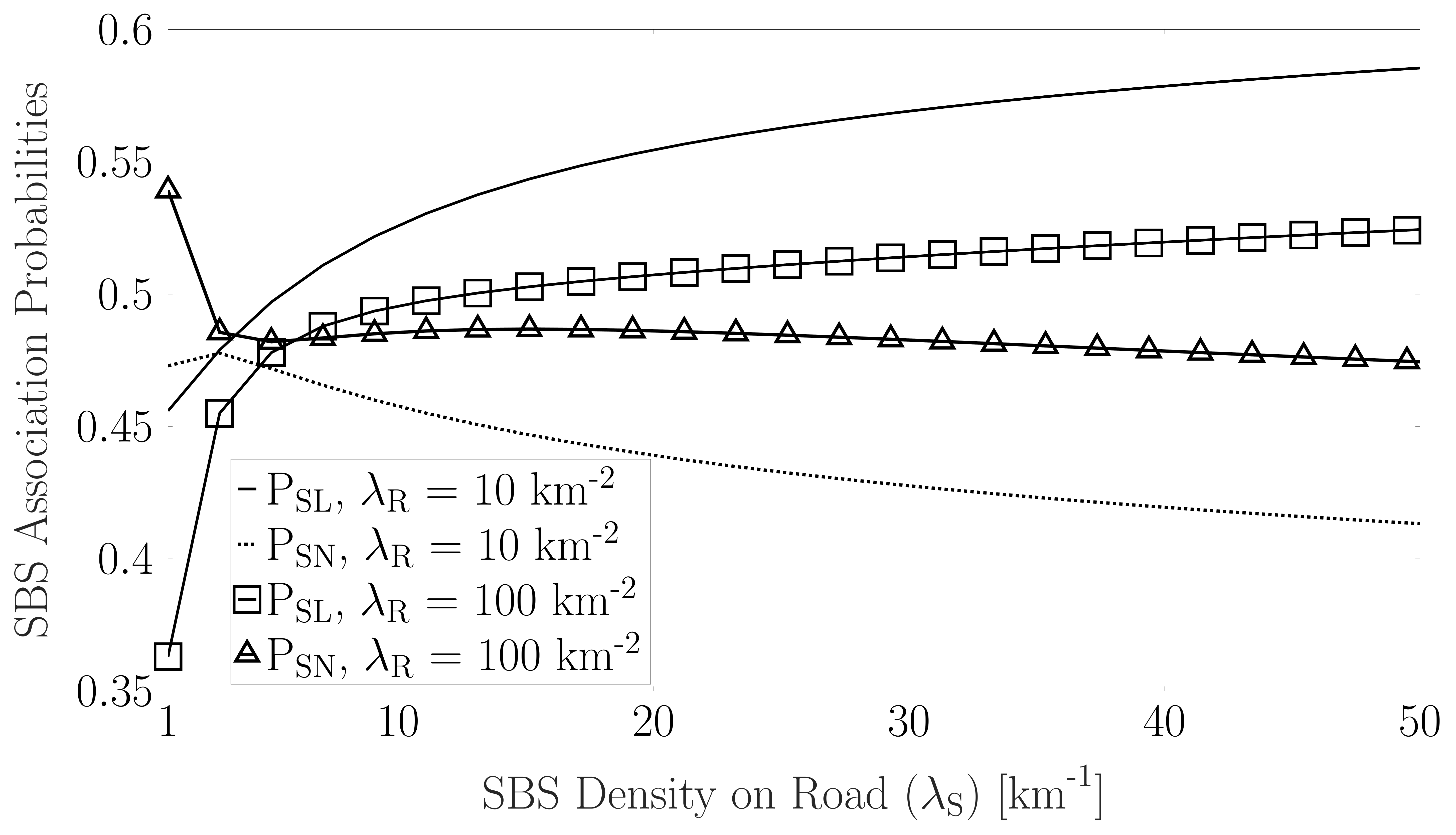}
\caption{Association probabilities vs SBS density for different road density.}
\label{fig:Asso_SBS}
\end{figure}
\begin{figure}[t]
\centering
\includegraphics[width=8cm,height = 4cm]{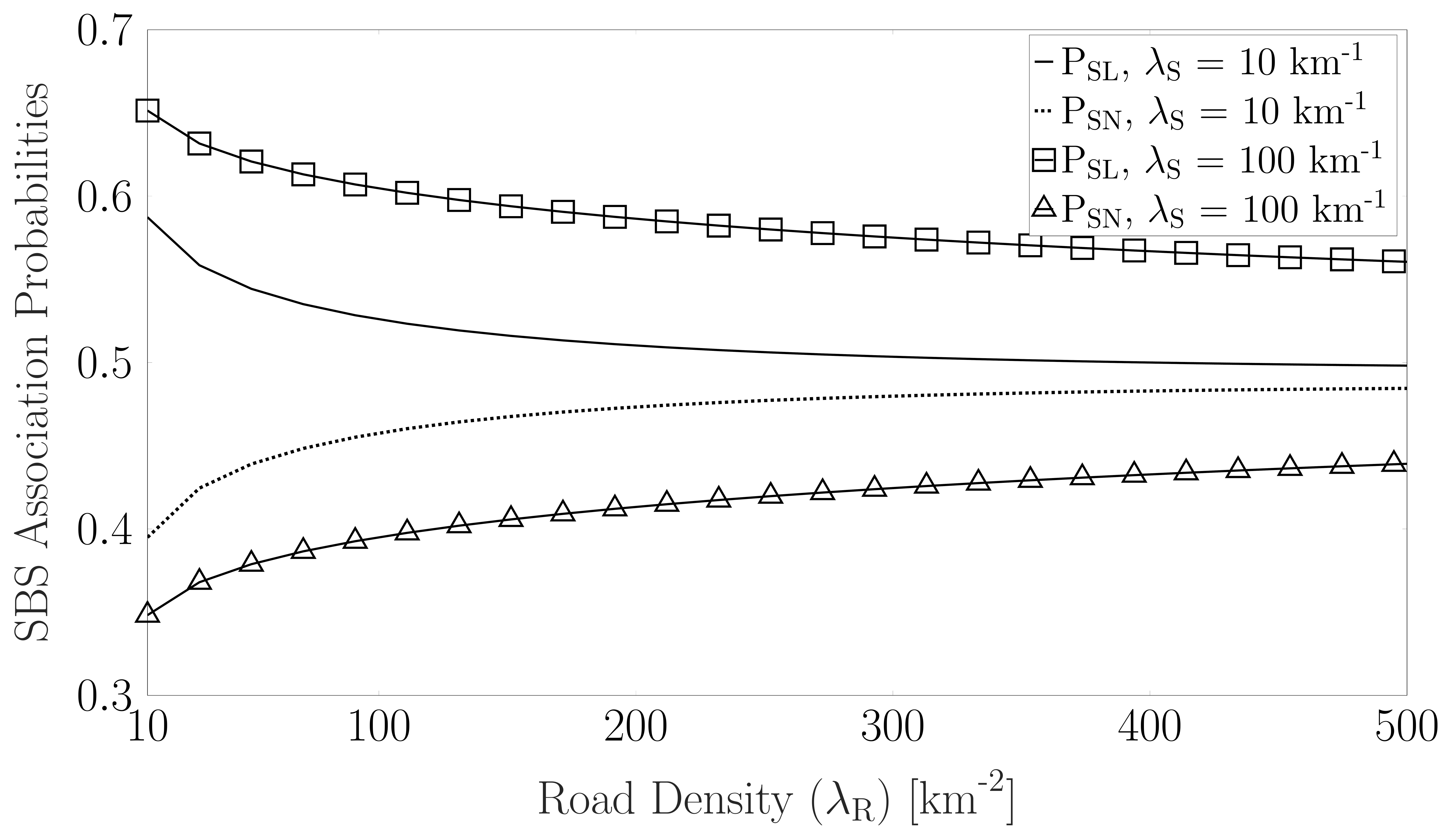}
\caption{Association probabilities vs road density for different SBS density.}
\label{fig:Asso_road}
\end{figure}
\begin{figure}[t]
\centering
\includegraphics[width=8cm,height = 4cm]{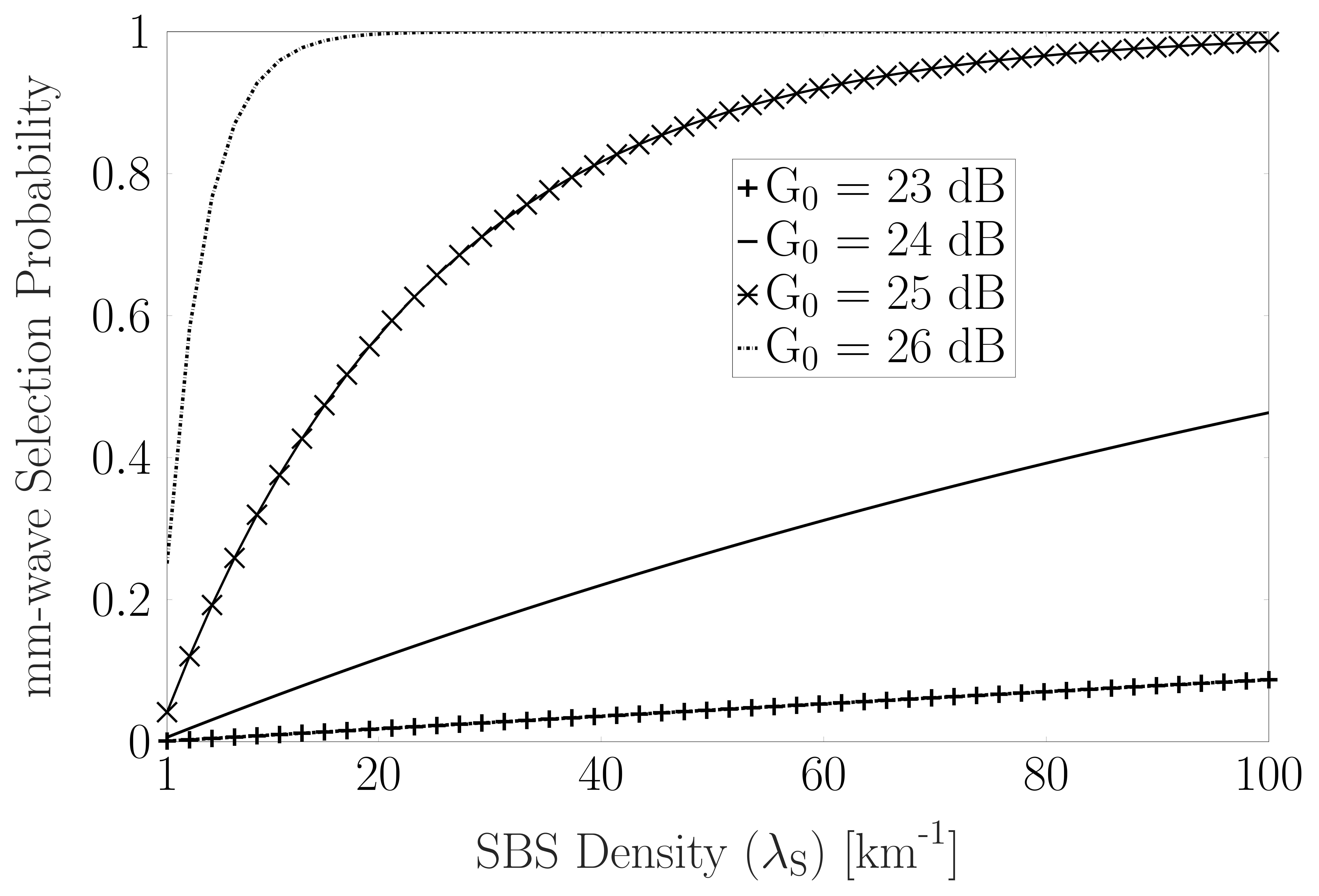}
\caption{Conditional mm-wave selection probability.}
\label{fig:RAT}
\end{figure}  
    
\subsection{SINR Coverage Probabilities}
     \begin{figure}[t]
\centering
\includegraphics[width=8cm,height = 4cm]{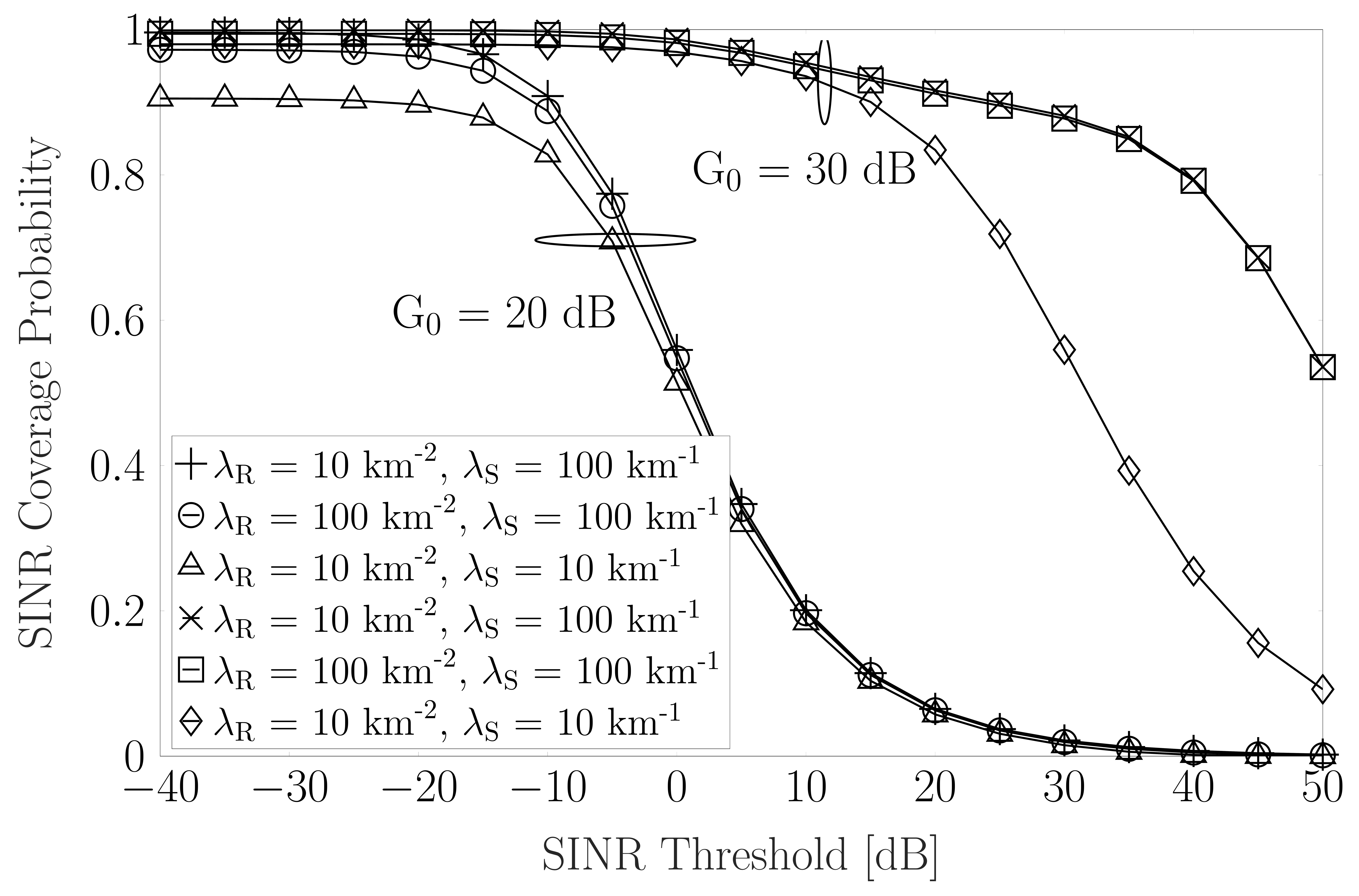}
\caption{SINR coverage probability for various road and SBS densities.}
\label{fig:SINR}
\end{figure}
\begin{figure}[t]
\centering
\includegraphics[width=8cm,height = 4cm]{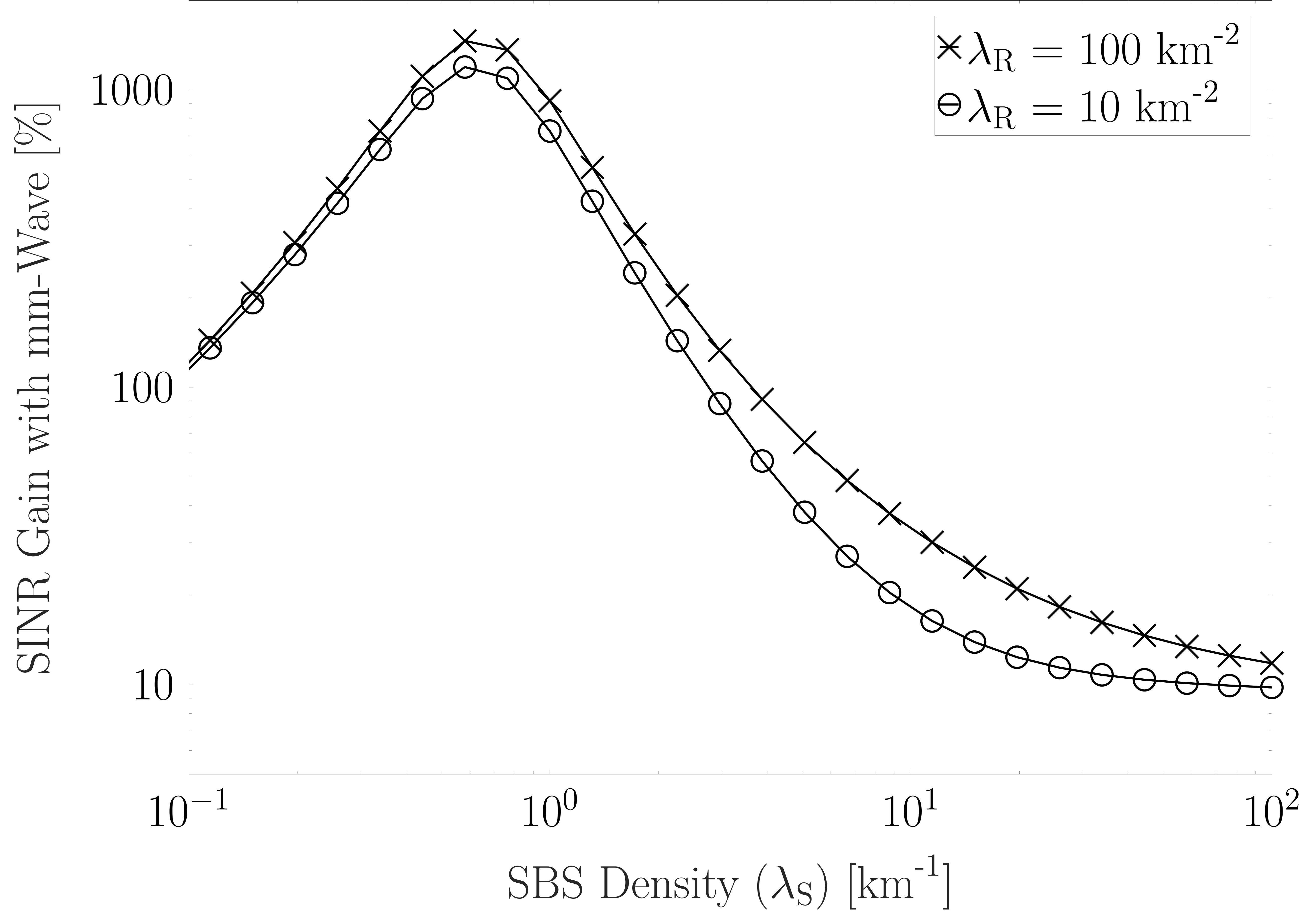}
\caption{SINR coverage probability gain with mm-wave, $\gamma$ = -10 dB.}
\label{fig:mm_vs_mu}
\end{figure}
    In Fig. \ref{fig:SINR} we plot the SINR coverage probability for different $\lambda_R$ and $\lambda_S$ and two different values of $G_0$. %With a $G_0$ of 20 dB, in case of LOS SBS association, the user always selects $\mu$-wave RAT, whereas, with $G_0 = 30$ dB, it always selects mm-wave RAT (see Fig.~\ref{fig:RAT}).
    Clearly, mm-wave (with $G_0 = 30$ dB) provides better SINR performance, precisely due to the large directional antenna gain and the fact that mm-wave transmissions suffer from minimal interference, i.e., only from the neighboring SBS.
    Furthermore, we observe that increasing $\lambda_R$ (i.e., going from a sparser to denser urban scenario), or decreasing $\lambda_S$, decreases the SINR performance of the user. The decrease in coverage with increasing $\lambda_R$ is because the interfering NLOS $\mu$-wave signals increase. This highlights the fact that, although with increasing road density, the number of SBSs perceived by the typical user increases, it does not necessarily improve the SINR performance of the user. Therefore, in denser urban scenarios, the operator should necessarily deploy more SBS per road, to maintain the SINR performance of the user.
    
    On the other hand, decreasing $\lambda_S$ increases the distance of the user from the nearest LOS SBS, thereby decreasing the useful signal power. This loss is more pronounced in the mm-wave association case with $G_0 = 30$ dB. This is due to the high path-loss of mm-wave signals, leading to severe deterioration in the useful signal power with decreasing $\lambda_S$.
    
Finally, we emphasize that the gain in SINR coverage by using mm-wave is dependent on $\lambda_S$. In Fig.~\ref{fig:mm_vs_mu}, we plot the gain with $G_0 = 30$ dB with respect to $G_0 = 20$ dB, at $\gamma = -10$ dB. With $G_0 = 20$ dB, the typical user mostly selects $\mu$-wave RAT, in contrast to mm-wave with $G_0 = 30$ dB. With $\lambda_S$, the gain initially increases, due to the decreasing proximity of the serving SBS. However, after a certain SBS density, the gain decreases due to increasing neighboring SBS interference. However, with very dense deployment we see that the gain saturates without decreasing further. 
Moreover, we see that with higher $\lambda_R$, the gain saturates at a higher value, as with higher $\lambda_R$, the $\mu$-wave performance deteriorates due to increasing NLOS SBS interference.
In fact, for very low $\lambda_S$ (e.g., $ \lambda_S \leq 10^{-4}$ m$^{-1})$, the gain may become negative, i.e., higher $\mu$-wave RAT selection would provide better SINR performance. However, such sparse SBS deployments may not be realistic in urban heterogeneous networks. Thus, the SBS density to maximize the SINR performance can be optimized, which we will study in a future work.
    \section{Conclusion}
    \label{Sec:Con}
We have analytically characterized a multi-tier heterogeneous network, where small cells are deployed along the roads and employ both $\mu$-wave and mm-wave RAT. We observed that going from a sparse to a more dense urban scenario, with more roads in a given region, does not necessarily increase the SINR performance of the user, even by keeping the SBS density per street constant. Increasing the SBS deployment in a street efficiently improves the  SINR coverage in $\mu$-wave operation. However, for mm-wave operation, too large SBS deployment leads to a saturation in the gain in SINR performance. In a future work we will investigate optimizing the deployment parameters to guarantee coverage, while taking downlink data rate into account.
    \bibliography{refer.bib}
\bibliographystyle{IEEEtran}
\appendices
\section{Proof of Lemma \ref{lem:p_G}}
\label{App:p_G}
We assume that a user is located at the center of the serving beam from its serving base station. Accordingly, the serving beam causes an interference region on the ground. We define "spillover" as the region of interference that the base stations create in their neighboring SBS while serving a user. The extent of this interference region depends on the height of the base stations, the beamwidth $\theta$ and the distance of the user from the base station.
Furthermore, we assume that the spillover region created by a base station while serving a user is limited to the neighboring SBS on the same side of the base station as that of the served user. Lastly, we assume that the spillover region does not extend beyond the neighboring SBS to the other side as that of the interfering SBS.

In what follows, we derive the probability that the typical user experiences mm-wave interference from the neighboring SBS. See Fig. \ref{fig:mmFig} for notations.

      Let the typical user $U_1$ be located at a distance $d_1$ from it's serving BS $B_1$ (the BS on the right in the figure). $U_1$ experiences mm-wave interference from the neighboring BS $B_2$ (the BS on the left in the figure), if it lies in the spillover region created by $B_2$, for some user $U_2$. We denote this spillover region by $s$.
     The probability that $U_1$ is located such that it falls in this spillover region is calculated as:
    \begin{align}
    \mathbb{P}\left(d_1  \geq \frac{d_a}{2} - s\right) = \exp\left(-\mu_{S}\left(\frac{d_a}{2} - s\right)\right) \label{eq:cond} ,
    \end{align}
where $d_a$ is the inter BS distance. This comes from the void probability of the PPP of the SBSs. Now, a user $U_2$, being served by the BS $B_2$, produces spillover to the coverage area of $B_1$ if and only if the extremest point of it's serving antenna beam crosses the cell boundary. In other words, the user $U_2$ produces spillover only if it's distance from $B_2$ is greater than some distance (say $d'$). Note that the maximum distance of $U_2$ from $B_2$ is $\frac{d_a}{2}$. Thus, to produce spillover in the coverage area of $B_1$, the user $U_2$ should lie in the region $d' \leq d_2 \leq \frac{d_a}{2}$.   
    The probability that at least one such user exists, and it's distance from it's serving SBS is between $d'$ and $\frac{d_a}{2}$ follows from the void probability and is obtained using the void probability of the user PPP and is given by $\left(1 - \exp\left(\mu_{OU}\left(\frac{d_a}{2}-d'\right)\right)\right)$.

For $U_2$, the spillover $(s)$ to the coverage area of $B_1$, caused due to $B_2$, can be calculated as:
    \begin{align}
    s &= GC - \frac{d_a}{2} = 
    h\tan\left(\theta + \psi \right) - \frac{d_a}{2} \nonumber \\
    &= h\tan\left(\theta + \left(\tan^{-1}\left(\frac{d_2}{h}\right) - \frac{\theta}{2}\right)\right)- \frac{d_a}{2} \nonumber \\ &= h\tan\left(\frac{\theta}{2} + \tan^{-1}\frac{d_2}{h}\right)- \frac{d_a}{2},
    \end{align}
where $\psi$ is the angle of depression from the top of $B_2$ to the nearest point of the serving beam of $U_2$ on the ground. Now, $d'$ is then obtained from the condition $s = 0$, i.e., the location of $U_2$, beyond which the coverage area of $B_1$ experiences spillover from $B_2$. This results in:
 \begin{align}
   d_2 = h\tan\left(\tan^{-1}\frac{d_a}{2h} - \frac{\theta}{2}\right)  = d' \nonumber 
    \end{align}
    
    Continuing our analysis, we impose the condition that no user on the left of $B_2$ effects in a spillover in the coverage region of $B_1$. Thus we have:
    \begin{align}
    d' \geq 0 \implies d_a \geq 2h\tan\left(\frac{\theta}{2}\right)
    \label{eq:cond1}
    \end{align}
  Lastly, we have the condition that $s$ cannot go beyond $B_1$, i.e., $s \leq \frac{d_a}{2}$. This holds true for all positions of $U_2$, specifically, at its maximum value i.e., $\frac{d_a}{2}$. This results in :
\begin{align}
&h\tan\left(\frac{\theta}{2} + \arctan\left(\frac{d_a}{2h}\right)  \right) \leq d_a \nonumber \\
\implies & \tan\left(\frac{\theta}{2}\right) \leq \frac{d_a h}{2h^2 + x^2}\nonumber \\
\implies & \tan\left(\frac{\theta}{2}\right) d_a^2 - h d_a + 2h^2\tan\left(\frac{\theta}{2}\right) \leq 0\nonumber \\
\implies  &\beta_1 \leq d_a \leq \beta_2 = \hat{d}
\label{eq:cond2}
\end{align}
where, 
\begin{align}
\beta_1 = \frac{h + \sqrt{h^2 - 8h^2\tan\left(\frac{\theta}{2}\right)}}{2\tan\left(\frac{\theta}{2}\right)} \\
\beta_2 = \frac{h + \sqrt{h^2 - 8h^2\tan\left(\frac{\theta}{2}\right)}}{2\tan\left(\frac{\theta}{2}\right)}
\end{align}
  Thus, from \eqref{eq:cond1} and \eqref{eq:cond2},
 \begin{align}
 d_a \geq \max \left(\beta_1,2h\tan\left(\frac{\theta}{2}\right)\right) = d^*
 \end{align} 
  
    Now we substitute this value of $s$ in \eqref{eq:cond}, and take the expectation with respect to $d_a$ and $d_2$. The joint distribution of $d_a$ and $d_2$ can be obtained according to the following reasoning. Assume that the random variables $d_a$ and $d_2$ are represented as: $d_a = X$ and $d_2 = Y$.   Now,
    \begin{align}
    f_{X,Y}(x,y) &= f_{X|Y}(x|y) f_Y(y) \nonumber \\
    &= \frac{-\delta}{\delta x}\mathbb{P}\left(X < x| Y = y\right)\frac{-\delta}{\delta y}\mathbb{P}\left(Y < y\right)\nonumber \\
    & \stackrel{a}{=}\frac{-\delta}{\delta x} \left[\exp\left(-\lambda_S(x-2y)\right)\right] \frac{-\delta}{\delta y} \left[\exp\left(-2\lambda_S y\right)\right] \nonumber \\
    &= \left( \lambda_S \exp(-\lambda_S (x-2y)) \right) \cdot \left(2\lambda_S \exp(-2\lambda_S y)\right) \nonumber \\
    & = 2\lambda_S^2 \exp(-\lambda_S (x)),
    \label{eq:pdf_distxy}
    \end{align}
where the step (a), the conditional probability is evaluated by the  following reasoning: given the fact that $B_2$ is located at a distance $y$ on any side of the user on the line, we calculate the probability of another base station (here $B_1$) on the other side of the user, at a distance greater than $y$ from the user, i.e., at a distance greater than $2y$ from $B_2$.

\section{Proof of Lemma \ref{lem:nearpt}}
\label{App:nearpt}
Consider that the nearest point of the NLOS SBS process from the typical user is at a distance $x$. Accordingly, the ball $\mathcal{B}(o,x)$ does not contain any NLOS SBS. We know that the number of lines of the Poisson line process hitting $\mathcal{B}(0,x)$ is Poisson distributed with parameter $2\pi\lambda_R x$~\cite{chiu2013stochastic}. Now, a randomly orientated line at a distance $r$ from the origin, has a chord length of $2\sqrt{x^2 - r^2}$, and a void probability $\exp(-2\lambda_S\sqrt{x^2 - r^2})$, in the circle $\mathcal{B}(0,x)$. As a result, the probability of no points falling in this ball, averaged over the number of lines, is calculated as:
   \begin{align}
   F_{d_{S1}}(x) &= \sum_{n=0}^{\infty} \frac{\left(2\pi\lambda_S x\right)^n\exp\left(-2\pi\lambda_S x\right)}{n!\, (x^{n})}\nonumber \\ &\left[\int_{r_1,r_2, ..., r_n = 0}^x\prod\limits_{i=1}^n\exp\left(-2\mu\sqrt{x^2 - r_i^2}\right)dr_i\right],  \nonumber
   \\& =  \sum_{n=0}^{\infty} \underbrace{\frac{\left(2\pi\lambda_S x\right)^n\exp\left(-2\pi\lambda_S x\right)}{n!\, (x^{n})}}_{A_1}\nonumber\\&\left[\underbrace{\int_{0}^x\exp\left(-2\mu\sqrt{x^2 - r^2}\right)dr}_{A_2}\right]^n, 
   \end{align}
   where the contribution from each of the chords is taken in the Palm sense, i.e, we calculate the void probabilities conditioned on the distances $r_i$ where we evaluate the integral in the range $0 \leq r_i \leq x$,  followed by dividing the integral by the Borel measure of the range i.e., $x$ for each chord. This results in the term $x^n$ in the denominator. The second term comes from the symmetry of the process $\phi_S$, i.e, contribution of each of the chords is equivalent on an average. The PDF of the distance $x$ is calculated by differentiating $F_{d_{S1}}(x)$ with respect to $x$:
   \begin{align}
   f_{d_{S1}}(x) &= -\frac{dF(x)}{dx} = -\sum_{n} \left[\frac{dA_1}{dx}A_2^n
 + A_1\frac{dA_2^n}{dx}\right] \nonumber \\ 
 & = \sum_{n} \left[\frac{(2\pi\lambda_S)^{n+1}}{n!} \exp(-2\pi\lambda_Sx)A_2^n + \right.\nonumber \\  & \left. \frac{(2\pi\lambda_S)^{n}}{n!} \exp(-2\pi\lambda_Sx)(nA_2^{n-1}A_3)\right], \nonumber \\
 & = 2\pi\lambda_S\exp(-2\pi\lambda (x-A_2)) \left[1+A_3\right] \nonumber
   \end{align}

\begin{align}
\mbox{where,    }   A_3 &=-1+ 2\mu x\int_0^x \frac{\exp(-2\mu\sqrt{x^2-r^2})}{\sqrt{x^2 - r^2}}dr \nonumber %\\ & \approx -1 +  \mu_S\pi x - 4\mu_S^2x^2 + \pi \mu^3 x^3 \nonumber
\end{align}
   \section{Proof of Lemma \ref{lem:Cox_PGF}}
   \label{App:Cox}
   The expression for PGF can be derived similarly to the derivation expression of the Laplace functional in \cite{6260478}. We start with a bounded support for $\nu(x)$, i.e. a disk centered at origin with radius $R$, and for the general case, the result follows from the monotone convergence theorem with increasing $R$.
    \begin{align}
       & G_{\phi_S}(\nu) = \mathbb{E}\left[\prod_{x\in\phi_S}\nu(x)\right] = \int\prod_{x\in\phi_S}\nu(x)\phi_S(dx) \nonumber \\
        &= \sum_0^{\infty} \frac{\exp\left(-2\pi R\lambda_S\right)}{n!\,(R)^{n}}\left(2\pi R\lambda_S\right)^n \nonumber \\ & \int_{r_1,r_2,\ldots,r_n = 0}^{R} \left(\prod_{i=1}^{n}\int_{\mathbb{R}}\prod_{x\in\phi_i}\nu(x)\phi_S(dx)\right) dr_1,\ldots,dr_n \nonumber 
    \end{align} 
    Now,
\begin{align}
&\quad\prod_{x\in\phi_i}\nu(x)\phi_S(dx) = \nonumber \\ &\exp\left(-2\mu_S\int_{0}^{\sqrt{R^2-r^2}} 1 - \nu\left(\sqrt{r_i^2 + t^2}\right)dt\right) \nonumber
\end{align}
As a result, we have:
\begin{align}
&\quad G_{\phi_S}(\nu) = \sum_0^{\infty} \frac{\exp(-2\pi R\lambda_S)\left(2\pi\lambda_S\right)^n}{n!} \nonumber \\ & \left(\int_{0}^{R}
\exp\left(-2\mu_S\int_{0}^{\sqrt{R^2-r^2}} 1 - \nu\left(\sqrt{r^2 + t^2}\right)dt\right)dr\right)^n \nonumber
\end{align}

   \section{Proof of Lemma \ref{lem:asso}}
   \label{App:asso}
   The probability of association with a LOS and NLOS MBS are given by:
\begin{align}
\mathbb{P}_{ML} &= \mathbb{E}[\mathds{1}(ML)]\mathbb{P}(P_{ML1} \geq P_{SL1})\mathbb{P}(P_{ML1} \geq P_{SN1})  \nonumber \\ \mathbb{P}_{MN} & =  \nonumber \left(1 - \mathbb{E}[\mathds{1}(ML)]\right) \nonumber \\ \nonumber &\mathbb{P}(P_{MN1} \geq P_{SL1})\mathbb{P}(P_{MN1} \geq P_{SN1}).
\end{align}
Here the term $\mathbb{P}_{ML}$, is a product of the probabilities of the existence of at least one LOS MBS, the probability that the received power from this strongest LOS MBS is larger than that received from the strongest LOS SBS and the strongest NLOS SBS. The term $\mathbb{P}_{MN}$ is developed similarly.

In the following we show calculate the terms of $\mathbb{P}_{MN}$. The terms for $\mathbb{P}_{ML}$ follows similarly. We have,

\begin{align}
 &\mathbb{P}(P_{MN1} \geq P_{SL1})  = \mathbb{P}\left(K_\mu P_Md_{M1}^{-\alpha_{MN}} \geq K_\mu P_Sd_{S1}^{-\alpha_{SL}} \right) \nonumber \\
 & = \mathbb{P}\left(d_{M1} \leq \left(\frac{P_S}{P_M}\right)^{-\frac{1}{\alpha_{MN}}}d_{SL1}^{\frac{\alpha_{MN}}{\alpha_{SL}}}\right) \nonumber \\
 & =\mathbb{E}_{d_{SL1}}\left[1 - \exp\left(-\pi\lambda_M\left(\frac{P_S}{P_M}\right)^{-\frac{2}{\alpha_{MN}}} d_{S1}^{\frac{2\alpha_{SN\mu}}{\alpha_{MN}}}\right)\right]. \nonumber \\
 &= 2\lambda_S \int_0^\infty \left(1 - \exp\left(-\pi\lambda_M\left(\frac{P_S}{P_M}\right)^{-\frac{2}{\alpha_{MN}}} x^{\frac{2\alpha_{SN\mu}}{\alpha_{MN}}}\right)\right) \nonumber \\ &\exp\left(-2\lambda_Sx\right) \nonumber 
\end{align}
Similarly, we can obtain $\mathbb{P}(P_{ML1} > P_{SL1})$.
On the other hand,
 \begin{align}
   &\mathbb{P}(P_{MN1} \geq P_{SN1})  = \nonumber \\  &\mathbb{E}_{d_{SN1}}\left[1 - \exp\left(-\pi\lambda_M\left(\frac{P_S}{P_M}\right)^{-\frac{2}{\alpha_{MN}}} d_{S1}^{\frac{2\alpha_{SN\mu}}{\alpha_{MN}}}\right)\right], \nonumber
   \end{align}
   where the expectation is taken with respect to the pdf of the first point of the NLOS SBS process. In the same way, we can obtain $\mathbb{P}(P_{ML1} \geq P_{SN1})$.
   
Now for the LOS SBS process we have:
\begin{align}
\mathbb{P}_{SL} = \mathbb{P}(P_{SL1} > P_{SN1})\left(\mathbb{P}(P_{SL1} > P_{ML1}) \mathbb{E}[\mathds{1}(ML)] + \right. \nonumber \\ \left.  \mathbb{P}(P_{SL1} > P_{MN1}) \left(1 - \mathbb{E}[\mathds{1}(ML)]\right)\right) \nonumber \end{align}

Here the first term corresponds to the probability that the received power from the strongest LOS SBS $(P_{SL1})$ is greater than that received from the strongest NLOS SBS. This is then multiplied by the probabilities $P_{SL1}$ is greater than the power received from the strongest LOS MBS, in case an LOS MBS exists, otherwise we consider the probability that $P_{SL1}$ is greater than the power received from the strongest NLOS MBS.
We have:
\begin{align}
& \mathbb{P}(P_{SL1} > P_{SN1}) %\nonumber \\ 
%&=  \mathbb{P}(K_\mu P_Sd_{SL1}^{-\alpha_{SL\mu}} > K_\mu P_Sd_{SN1}^{-\alpha_{SN\mu}}) \nonumber \\
 = \mathbb{E}_{d_{SN1}}\left[1-\exp\left(-2\mu d_{SN1}^{\frac{\alpha_{SN\mu}}{\alpha_{SL\mu}}}\right)\right] \nonumber
\end{align} is calculated using the void probability of the LOS SBS process.
For the MBSs, we have $\mathbb{P}(P_{SL1} > P_{Mv1}) = 1 - \mathbb{P}(P_{Mv1} > P_{SL1})$, for $v \in \{L,M\}$.
The association probability with the NLOS SBS tier can be calculated as: $\mathbb{P}_{SN} = 1 - \mathbb{P}_{ML} - \mathbb{P}_{MN} - \mathbb{P}_{SL}$.
   
   \section{Proof of Theorem \ref{theo:SINR}}
   \label{App:SINR}
   The derivations for the SINR coverage probability in the $\mu$-wave association case is fairly straightforward, and can be found in \cite{bai2015coverage,elshaer2016downlink,andrews2011tractable} etc. We present the proof sketch for one association case. The other cases follow on similar lines. In case the user is associated to a NLOS SBS, we have:
\begin{align}
SINR_{SN\mu} = \frac{P_S K_\mu h_{SN1} d_{SN1}^{-\alpha_{SN}}}{\sigma^2_{\mu} + I_{SN} + I_{ML} + I_{SL} + I_{MN}} \nonumber
\end{align}
\begin{align}
&\mathbb{P}\left(SINR_{SN\mu} \geq \gamma \right) \nonumber \\ & = \mathbb{P}\left(h_{SN1}  > \frac{\gamma\left(\sigma^2_{\mu} + I_{SN} + I_{ML} + I_{SL} + I_{MN}\right)}{P_S K_\mu  d_{SN1}^{-\alpha_{SN\mu}}}\right) \nonumber
\end{align}
where, $I_{(.)}$ are the interference terms from different tiers.
The expression is evaluated by using the tail distribution of the exponentially distributed $h_{SN1}$, followed by the independence of the different BS process. 
We provide the steps for evaluation of the term corresponding to the MBS LOS process. 
The other terms are obtained similarly.
\begin{align}
\mathbb{E}&\left[\exp  \left(-\frac{\gamma \sum\limits_{\phi_{ML}} I_{ML}}{P_S K_\mu  d_{SN1}^{-\alpha_{SN\mu}}}\right)\right]  \nonumber \\
& = \mathbb{E}\left[\exp\left(-\frac{\gamma P_M K_{\mu}\sum\limits_{\phi_{ML}} h_{MLi}d^{-\alpha_{ML}}_{MLi}}{P_S K_\mu  d_{SN1}^{-\alpha_{SN\mu}}}\right)\right] \nonumber \\
&\stackrel{a}{=}\mathbb{E}\left[\prod\limits_{\phi_{ML}} \exp\left(-\frac{\gamma  P_M h_{MLi}d^{-\alpha_{ML}}_{MLi}}{P_S  d_{SN1}^{-\alpha_{SN\mu}}}\right)\right] \nonumber \\
&\stackrel{b}{=}\mathbb{E}\left[\prod\limits_{\phi_{ML}} \frac{P_Sd_{MLi}^{\alpha_{ML}}}{P_Sd_{ML}^{\alpha_{ML}} + \gamma P_Md_{SN1}^{\alpha_{SN}}}\right] \nonumber \\
&=\mathbb{E}_{d_{SN1}}\left[G_{\phi_M}\left(\frac{P_Sx^{\alpha_{ML}}}{P_Sx^{\alpha_{ML}} + \gamma P_Md_{SN1}^{\alpha_{SN}}}\right)\right] \nonumber,
\end{align}
The step (a) follows from the independence of the variables $h_{MLi}$, (b) is obtained by applying the Laplace functional of $h_{MLi}$. Moreover, as per Lemma~\ref{lem:COX}, in case the user is associated to an NLOS SBS, the interfering NLOS SBS process $\phi_{SN})$ consists of the stationary $\phi_S$ and a line process $\phi_i$, passing though the serving SBS. Accordingly, the SINR coverage probability for NLOS SBS association has an additional term, which takes the line process into account.

For the mm-wave association case, we consider the interference only from the neighboring SBS. 
    Accordingly, we have:
\begin{align}
&\mathbb{P}\left(SINR_{SLm} \geq \gamma \right) = \nonumber \\  &\mathbb{P}\left(h_{SL1} \geq \frac{\gamma\sigma^2_{mm} +\gamma K_mP_S h_{SL2}d_{SL2}^{-\alpha_{SLm}}p_GG_0}{K_mP_Sd_{SL1}^{-\alpha_{SLm}}G_0}\right) \nonumber
\end{align}
where $d_{SL2}$ is the distance of the neighboring SBS.
Using Alzer's lemma for the tail distribution of a gamma random
variable with integer parameter \cite{alzer1997some}, Lemma 1, and from the definition of the PGF, the result \eqref{eq:SINR2} follows. The expectation is taken with respect to the distances of the serving and the neighboring SBS for the typical user.
Let the distance of the typical user from the serving and the neighboring SBS be given by $x$ and $y$ respectively. Thus the inter SBS distance between the serving and the interfering SBS is $x+y$. Now we calculate the joint distribution of $x$ and $y$ similar to that derived in \eqref{eq:pdf_distxy}:
 \begin{align}
    f_{X,Y}(x,y) &= 2\lambda_S^2 \exp(-\lambda_S (x +  y)).
    \end{align}
Taking the expectation with respect to the above joint distribution completes the proof.

\end{document}